\pdfoutput=1
\documentclass[10pt,a4paper]{article}

\usepackage[T1]{fontenc}
\usepackage[utf8]{inputenc}
\usepackage[english]{babel}
\usepackage[a4paper,margin=1in]{geometry}

\usepackage[fleqn]{amsmath}
\usepackage{amssymb}
\usepackage{amsthm}
\usepackage{mathtools}
\usepackage{thm-restate}

\usepackage[linesnumbered,ruled,noend,vlined]{algorithm2e}

\usepackage[bookmarksnumbered]{hyperref}

\hypersetup{
  hidelinks,
  pdftitle={Linear-Query Deterministic Approximation for Non-monotone Submodular Maximization under a Knapsack Constraint},
  pdfauthor={Zihui Liu and Zhijie Zhang},
  pdfkeywords={Submodular maximization, knapsack constraint, linear-time algorithms, bicriteria optimization}
}

\theoremstyle{plain}
\newtheorem{theorem}{Theorem}
\newtheorem{lemma}[theorem]{Lemma}
\newtheorem{observation}[theorem]{Observation}
\newtheorem{claim}[theorem]{Claim}
\newtheorem{definition}[theorem]{Definition}

\title{Linear-Query Deterministic Approximation for Non-monotone Submodular Maximization under a Knapsack Constraint}

\author{%
Zihui Liu\\[0.7ex]
Fuzhou University\\
\texttt{240320049@fzu.edu.cn}%
\and
Zhijie Zhang\\[0.7ex]
Fuzhou University\\
\texttt{zzhang@fzu.edu.cn}%
}

\date{}

\begin{document}

\maketitle

\begin{abstract}
Submodular maximization under a knapsack constraint (SMK) is a fundamental combinatorial optimization problem with broad applications across machine learning and data mining.
Motivated by large-scale applications where query efficiency is paramount, we study non-monotone SMK and focus on deterministic algorithms with linear query complexity.
Prior deterministic linear-query algorithms achieve at best a $1/5-\varepsilon$ approximation, falling short of the $1/4-\varepsilon$ ratio attainable by randomized algorithms.
We close this gap by presenting a deterministic $(1/4-\varepsilon)$-approximation with $O(n\log^2(1/\varepsilon)/\varepsilon^2)$ queries.
Our approach partitions the analysis based on the cost of the largest optimal element $r$: when the cost of $r$ is moderate, we refine the threshold-twin-greedy framework via residual-budget enumeration to tighten the analysis; when the cost of $r$ is large, we reduce the problem to bicriteria submodular maximization.
As a secondary contribution, we obtain a $(1/2-\varepsilon, O(1/\varepsilon))$-bicriteria approximation with $O(n\log(1/\varepsilon)/\varepsilon^2)$ queries, improving over the previous $O(n^2/\varepsilon)$ query bound.
\end{abstract}

\noindent\textbf{Keywords}\enskip Submodular maximization, knapsack constraint, linear-time algorithms, bicriteria optimization

\section{Introduction}
\label{sec:intro}

Submodular functions capture the principle of diminishing returns and have found widespread applications across machine learning and combinatorial optimization. Classical use cases include influence maximization in social networks~\cite{KempeKT03}, sensor placement and outbreak detection~\cite{LeskovecKGFVG07}, document and image summarization~\cite{LinB11,MirzasoleimanKS16}, active learning and data subset selection~\cite{NarasimhanB05,BadanidiyuruV14}, revenue maximization and recommendation systems~\cite{Kuhnle21,PhamT0T23,Pham25}, facility location and coverage problems~\cite{Wolsey82,KhullerMN99}, and canonical graph problems such as maximum cut~\cite{GuptaRST10,AmanatidisFLLR20}.

In this work, we study \emph{submodular maximization under a knapsack constraint (SMK)}. Given a finite ground set \(V\), a non-negative submodular function \(f:2^V\to\mathbb{R}_+\) accessed via a value oracle, a positive cost function \(c:V\to\mathbb{R}_+\), and a budget \(B\), the goal is to find a set \(S\subseteq V\) maximizing \(f(S)\) subject to \(c(S)\le B\).

When \(f\) is monotone, Nemhauser and Wolsey~\cite{NemhauserW78} established that any algorithm achieving an approximation ratio better than \(1-1/e\) requires exponentially many oracle queries. The classical \((1-1/e)\)-approximation was attained by Sviridenko~\cite{Sviridenko04} with \(O(n^5)\) queries and later improved to \(O(n^4)\) by Feldman, Nutov, and Shoham~\cite{FeldmanNS23}. For the non-monotone setting, Qi~\cite{Qi22} proved that any algorithm with ratio better than \(0.478\) requires exponential queries. Building on a long line of work~\cite{KulikST13,BuchbinderF19,BuchbinderF24}, the state-of-the-art polynomial-query algorithm achieves a \(0.401\)-approximation~\cite{BuchbinderF24}.

Despite their strong theoretical guarantees, these algorithms suffer from prohibitively high query complexity, rendering them impractical for large-scale datasets. This has motivated a growing body of work on \emph{fast} SMK algorithms whose query complexity is (nearly) linear in \(n=|V|\), often trading off approximation ratio for efficiency.

For monotone objectives, building upon Badanidiyuru and Vondr{\'{a}}k~\cite{BadanidiyuruV14}, Ene and Nguyen \cite{EneN19} proposed a \((1-1/e-\varepsilon)\)-approximation using \((1/\varepsilon)^{O(1/\varepsilon^4)} n\log^2 n\) queries. While theoretically near-optimal, the towering dependence on \(1/\varepsilon\) limits practicality. Subsequent work pursued more practical algorithms with weaker guarantees. Specifically, Li et al.~\cite{LiF0K22} designed a \((1/2-\varepsilon)\)-approximation with \(O(n\log(1/\varepsilon)/\varepsilon)\) queries.

For non-monotone objectives, the landscape of linear-query algorithms has evolved rapidly.
On the randomized side, Amanatidis et al.~\cite{AmanatidisFLLR20} achieved a \((0.1715-\varepsilon)\)-approximation with \(O(n\log(n/\varepsilon)/\varepsilon)\) queries.
Han et al.~\cite{HanCZZWYXTH21} gave a \((1/4-\varepsilon)\)-approximation with \(O(n\log(n/\varepsilon)/\varepsilon)\) queries.
Pham et al.~\cite{PhamT0T23} later matched this ratio with \(O(n\log(1/\varepsilon)/\varepsilon)\) queries. On the deterministic side, Han et al.~\cite{HanCZZWYXTH21} achieved a \((1/6-\varepsilon)\)-approximation with \(O(n\log(n/\varepsilon)/\varepsilon)\) queries. Pham et al.~\cite{PhamT0T23} introduced the first deterministic linear-query algorithm for non-monotone SMK, achieving the same \((1/6-\varepsilon)\) guarantee with \(O(n\log(1/\varepsilon)/\varepsilon)\) queries. Very recently, Pham~\cite{Pham25} improved the ratio to \(1/5-\varepsilon\) while preserving linear query complexity.

Notably, however, the best approximation achievable by deterministic linear-query algorithms remains strictly below that of their randomized counterparts. The inherent gap between deterministic and randomized approaches has been a central theme throughout the study of submodular maximization~\cite{BuchbinderF18,BuchbinderF19soda,BuchbinderF24focs,BuchbinderF25}. This state of affairs naturally motivates the following question:
\begin{center}
\emph{Can a deterministic algorithm with linear queries match the approximation factor as the best randomized algorithm for non-monotone SMK?}
\end{center}

\subsection{Our Contributions}

We answer the above question affirmatively by presenting a deterministic linear-query algorithm for non-monotone SMK that matches the \(1/4-\varepsilon\) approximation ratio previously achievable only by randomized algorithms. Table~\ref{tab:smk-comparison} summarizes prior and our results.

\begin{table}[t]
\centering
\caption{Algorithms for non-monotone SMK with (nearly) linear query complexity.}
\label{tab:smk-comparison}
\begin{tabular}{lccc}
\hline
Algorithm & Type & Approximation factor & Query complexity \\
\hline
\multicolumn{4}{c}{\emph{Randomized}} \\
\hline
Amanatidis et al.~\cite{AmanatidisFLLR20} & Rand. & \(0.1715-\varepsilon\) & \(O(n\log(n/\varepsilon)/\varepsilon)\) \\
Han et al.~\cite{HanCZZWYXTH21} & Rand. & \(1/4-\varepsilon\) & \(O(n\log(n/\varepsilon)/\varepsilon)\) \\
Pham et al.~\cite{PhamT0T23} & Rand. & \(1/4-\varepsilon\) & \(O(n\log(1/\varepsilon)/\varepsilon)\) \\
\hline
\multicolumn{4}{c}{\emph{Deterministic}} \\
\hline
Han et al.~\cite{HanCZZWYXTH21} & Det. & \(1/6-\varepsilon\) & \(O(n\log(n/\varepsilon)/\varepsilon)\) \\
Pham et al.~\cite{PhamT0T23} & Det. & \(1/6-\varepsilon\) & \(O(n\log(1/\varepsilon)/\varepsilon)\) \\
Pham~\cite{Pham25} & Det. & \(1/5-\varepsilon\) & \(O(n\log(1/\varepsilon)/\varepsilon)\) \\
\textbf{This paper} & Det. & \(1/4-\varepsilon\) & \(O(n\log^2(1/\varepsilon)/\varepsilon^2)\) \\
\hline
\end{tabular}
\end{table}

\begin{restatable}{theorem}{ThmMain}
\label{thm:main}
There exists a deterministic algorithm that achieves an approximation ratio of \(1/4-\varepsilon\) for non-monotone SMK using \(O(n\log^2(1/\varepsilon)/\varepsilon^2)\) value oracle queries.
\end{restatable}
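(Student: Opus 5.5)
The plan follows the case split announced in the abstract, with $r$ the largest-cost element of an optimal solution $O$. We first get a constant-factor estimate $\Gamma$ of $f(O)$ with $O(n)$ queries, for instance by running Pham's deterministic $(1/5-\varepsilon)$ algorithm or a simple linear threshold pass. This reduces the unknown quantities ($f(O)$, and $c(r)$ up to a $(1+\varepsilon)$ factor) to $O(\log(1/\varepsilon)/\varepsilon)$ candidate guesses each. The final algorithm runs every branch below for every guess and returns the best feasible set found. Each branch will be a linear pass costing $O(n\log(1/\varepsilon)/\varepsilon)$, and there are $O(\log(1/\varepsilon)/\varepsilon)$ guesses of $c(r)$ relative to $B$. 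Together these give the claimed $O(n\log^2(1/\varepsilon)/\varepsilon^2)$ bound.

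\emph{Moderate case: $c(r)\le (1-\delta)B$ for a suitable constant threshold.} We run a threshold-twin-greedy procedure. Two disjoint sets $S_1,S_2$ are built in one pass with a decreasing density threshold $\tau$ (geometric steps of $1+\varepsilon$ starting from $\Gamma/B$). An element is added to whichever set gives the larger marginal density, provided that density is at least $\tau$ and the budget allows it. The standard twin-greedy inequality uses submodularity and disjointness, $f(S_1\cup O)+f(S_2\cup O)\ge f(O)$, to compare $f(S_1)+f(S_2)$ with $f(O)$. The loss in earlier analyses (which gives $1/5$) comes from elements of $O$ rejected for budget overflow. To repair this we enumerate the residual budget: we guess (from an $\varepsilon$-grid) how much budget $O\setminus\{r\}$ needs beyond what has been filled, and we reserve room for $r$ by also trying $S_i\cup\{r'\}$ for the best single element $r'$ fitting the residual. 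Once $c(r)$ is bounded away from $B$, every rejected optimal element has cost at most $c(r)$. Charging the rejected elements against the residual then gives $f(S_1)+f(S_2)+f(\text{best single completion})\ge (1-O(\varepsilon))f(O)$ with the right weights, hence a $1/4-\varepsilon$ guarantee after taking the max.

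\emph{Large case: $c(r)>(1-\delta)B$.} Here $O\setminus\{r\}$ has cost at most $\delta B$ and $f(O)\le f(\{r\})+f(O\setminus\{r\})$ plus non-monotone corrections. We apply the bicriteria result (the paper's secondary contribution, taken here as a subroutine): a $(1/2-\varepsilon,O(1/\varepsilon))$-bicriteria algorithm run with budget $\varepsilon' B$ on an appropriately restricted instance. It returns a set of cost at most $O(\delta/\varepsilon)B$ with value at least about $\tfrac12 f(O\setminus\{r\})$. That set is combined with a single large element chosen greedily to fit $B$. The construction also handles the case where adding the large element hurts, using $f(A\cup\{r\})+f(A)\ge\ldots$ style averaging and the fact that $r$ is one element. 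Choosing $\delta$ proportional to $\varepsilon$ makes the combined set feasible, and the two alternatives $f(\{r\})$ and $f(A\cup\{r\})$ together give at least $(1/4-\varepsilon)f(O)$.

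\emph{Main obstacle.} The hard step is the moderate-case refinement: showing that residual-budget enumeration closes the $1/5\to1/4$ gap. This requires a careful charging of optimal elements discarded for budget reasons against the guessed residual, while keeping the twin-greedy non-monotone inequality intact. My first attempt would be to track the first time each $S_i$ exceeds $B-c(r)$ and to bound each $f(S_i)$ by the density of $O$ at that moment. I would then pick $\delta$ to balance the two cases, which is the point where the exact $1/4$ must come out.
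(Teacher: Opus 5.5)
Your architecture matches the paper's: split on $c(r)$, a twin threshold greedy with a guessed residual budget plus single-element augmentation in the moderate case, and bicriteria plus best singleton in the large case. However, the moderate case, where the whole $1/5\to1/4$ improvement lives, is not proved. You defer it as the ``main obstacle'' and assert the target inequality ``with the right weights''. The missing ingredients are concrete.

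\textbf{Capping.} For each guess $i$ the twin greedy must be rerun with both sets capped at $\varepsilon(1+\varepsilon)^iB\le B-c(r)$, and a set that hits its cap is frozen. This does four things:
\begin{itemize}
\item it makes $S+r$ feasible, so the augmented candidate satisfies $f(S^+)\ge f(S+r)$;
\item it gives $(1+\varepsilon)c(X_{p+1})>B-c(r)\ge c(O')$ at saturation;
\item it ensures that tie-break charges made from the other side are paid by gains inside $X_{p+1}$;
\item it accounts for the extra $\log(1/\varepsilon)/\varepsilon$ factor in the query bound.
\end{itemize}

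\textbf{Charging.} Combine $f(O)\le f(O\cup X_p)+f(O\cup Y_q)$ with $f(O\cup S)-f(S+r)\le f(O'\cup S)-f(S)$. Then split $O'\setminus X_p$ into two parts. Elements that entered $Y$ while $X$ was still active are charged to $f(e\mid Y_e)$ by the tie-break. The rest are charged by the last threshold:
\begin{itemize}
\item at most $\varepsilon f(O)$ if $X$ never saturates;
\item otherwise, after subtracting $c(X_{p+1}\cap O')$ in the cost comparison, at most $\frac{1+\varepsilon}{1-\varepsilon}$ times the greedy gains $f(e\mid X_e)$ of $(X_p\setminus O')\cup\{x_{p+1}\}$, plus $\frac{\varepsilon}{1-\varepsilon}$ times those of $X_{p+1}\cap O'$.
\end{itemize}
Gains of $O'\cap X_p$ are paid on the $Y$ side and gains of $X_p\setminus O'$ on the $X$ side, so each gain is charged with total weight $1+O(\varepsilon)$. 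Summing yields $f(O)\le f(X_p+r)+f(Y_q+r)+f(X_p)+f(Y_q)+O(\varepsilon)f(O)$. For a saturated set, the corresponding standalone term becomes $\frac{1+2\varepsilon}{1-\varepsilon}f(X_{p+1})$ or $\frac{1+2\varepsilon}{1-\varepsilon}f(Y_{q+1})$. These four candidates are the source of $1/4$. It does not come from balancing $\delta$; each case gives $1/4$ on its own.

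\textbf{Starting threshold.} Your starting threshold $\Gamma/B$ also breaks the argument. Caps can be as small as $\varepsilon B$, and a saturation during the first scan has no preceding scan that bounds the densities of unselected elements of $O'$. The paper starts at $19\Gamma/(4\varepsilon B)$, so that first-scan saturation directly certifies $f(X_{p+1})\ge\varepsilon B\cdot 19\Gamma/(4\varepsilon B)\ge f(O)/4$.

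\textbf{Large case.} The combination step fails. The bicriteria set $A$ may cost up to $\beta\delta B$ with $\beta=O(1/\varepsilon)$, whereas $B-c(r)\le\delta B$. So $A\cup\{r\}$ is generally infeasible, and an element fitting $B-c(A)$ need not be $r$. No combination is needed:
\begin{itemize}
\item Choose $\delta$ with $\beta\delta\le 1$, so that $A$ itself is feasible, and return $\max\{f(e_{\max}),f(A)\}$.
\item If $f(r)<f(O)/4$, then $f(O')\ge f(O)-f(r)+f(\varnothing)>\frac34 f(O)$ by submodularity and non-negativity, with no correction terms.
\item $O'$ is feasible for budget $\delta B$, so a $(1/3-\varepsilon)$ bicriteria factor already suffices.
\end{itemize}
The paper uses its $(1/3-\varepsilon,12)$ algorithm on budget $\varepsilon B$ with $\varepsilon\le 1/12$, at $O(n\log(1/\varepsilon)/\varepsilon)$ queries.
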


Our approach distinguishes two cases based on the cost of the largest element $r$ in an optimal solution $O$.
In the regime where $c(r) \le (1-\varepsilon)B$, we enhance the threshold-twin-greedy framework~\cite{PhamT0T23, Pham25} via residual-budget enumeration: we guess $B - c(r)$ and run the framework independently for each guess.
While this incurs a multiplicative $O(\log(1/\varepsilon)/\varepsilon)$ overhead in query complexity, it enables a tighter analysis that improves the approximation ratio from $1/5$ to $1/4$.

When $c(r) \ge (1-\varepsilon)B$, we observe that this case reduces to \emph{bicriteria submodular maximization}~\cite{Crawford23,FeldmanK25}: given budget \(B\) and parameters \(\alpha\in(0,1)\) and \(\beta\ge 1\), an algorithm is an \((\alpha,\beta)\)-bicriteria approximation if it returns a set \(S\) with \(f(S)\ge\alpha\cdot f(O)\) and \(c(S)\le\beta\cdot B\).
Crawford~\cite{Crawford23} gave a \((1/2-\varepsilon, O(\varepsilon^{-2}))\)-bicriteria approximation using \(O(n^2/\varepsilon^3)\) queries.
Feldman and Kuhnle~\cite{FeldmanK25} achieved a \((1/2-\varepsilon, O(1/\varepsilon))\)-bicriteria approximation using \(O(n^2/\varepsilon)\) queries.
We improve on these results in Theorem~\ref{thm:bicriteria-main}.
The first item yields a $(1/3-\varepsilon, 12)$-bicriteria approximation with $O(n\log(1/\varepsilon)/\varepsilon)$ queries, which is sufficient for handling the case $c(r) \ge (1-\varepsilon)B$.
\begin{restatable}{theorem}{ThmBicriteria}
    \label{thm:bicriteria-main}
    For bicriteria SMK,
    \begin{enumerate}
        \item there exists a deterministic \((1/3-\varepsilon, 12)\)-bicriteria approximation algorithm for SMK that uses \(O(n\log(1/\varepsilon)/\varepsilon)\) queries;
        \item there exists a deterministic \((1/2-\varepsilon, O(1/\varepsilon))\)-bicriteria approximation algorithm for SMK that uses \(O(n\log(1/\varepsilon)/\varepsilon^2)\) queries.
    \end{enumerate}
\end{restatable}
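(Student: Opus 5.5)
The plan is to build both algorithms from the same primitive: a deterministic \emph{twin-threshold greedy} with a relaxed budget. We maintain two disjoint sets $S_1, S_2$ and decreasing density thresholds $\tau$. We add an element $e$ to the set $S_i$ that maximizes its marginal density $f(e\mid S_i)/c(e)$, provided that density is at least $\tau$. We stop a set only once its cost exceeds $\beta B$, not $B$.

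To make the query count linear, we first compute a constant-factor estimate $\Gamma$ of $f(O)$ by one linear pass. Concretely, we take the best of $\max_e f(e)$ and a simple density-greedy on the elements with $c(e)\le B$; this gives $\Gamma \le f(O)\le \kappa n$-free constant times $\Gamma$ after a second refinement pass, as in Li et al.\ and Pham et al. Thresholds then range geometrically from about $\Gamma/B$ down to $\varepsilon\Gamma/B$, in $O(\log(1/\varepsilon)/\varepsilon)$ levels. Each level makes one pass over $V$, which gives $O(n\log(1/\varepsilon)/\varepsilon)$ queries.

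The analysis uses two standard facts. For disjoint $S_1,S_2$, non-negativity and submodularity give $f(S_1\cup O)+f(S_2\cup O)\ge f(O)$. It then suffices to show $f(S_i\cup O)-f(S_i)$ is small for each $i$. We split $O$ into elements rejected by both sets and elements absorbed into the other set.

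\begin{enumerate}
\item \textbf{Item 1, $(1/3-\varepsilon,12)$.} Consider an element $o\in O$ never added to $S_i$. Either its final marginal density is below the last threshold, so its total contribution is at most $\varepsilon f(O)$ in total via $c(O)\le B$. Or the run terminated because $c(S_i)>\beta B$.
\begin{enumerate}
\item In the termination case, every added element had density at least $\tau$ near the relevant level, so $f(S_i)\ge \tau\beta B$ up to a $(1-\varepsilon)$ factor. Choosing $\beta$ constant, here $12$, makes $f(S_i)$ already at least $f(O)/3$. The constant absorbs the loss from $\Gamma$ being only an approximation of $f(O)$.
\item In the non-termination case, we get $f(S_i\cup O)\le (2+\varepsilon) f(S_i)+f(S_{3-i})$-type inequalities. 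Summing over $i$ and using the disjointness inequality gives $\max_i f(S_i)\ge(1/3-\varepsilon)f(O)$.
\end{enumerate}
We return the better of $S_1$ and $S_2$, with their costs at most $(\beta+1)B$. We calibrate so that the cost is $\le 12B$. Elements of cost larger than $B$ are discarded up front.

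\item \textbf{Item 2, $(1/2-\varepsilon,O(1/\varepsilon))$.} We replace the two sets by $\ell=\lceil 1/\varepsilon\rceil$ disjoint greedy sets, or equivalently a twin greedy run on $O(1/\varepsilon)$ guesses. The analysis uses the averaged inequality $\sum_{i=1}^{\ell} f(S_i\cup O)\ge(\ell-1)f(O)$, which holds since each element lies in at most one $S_i$. This drives the ratio from $1/3$ to $1/2$, because the cross-term loss $f(S_{3-i})$ becomes an average over $\ell-1$ other sets, contributing only $O(\varepsilon)$. Maintaining $\ell$ sets in one pass costs a factor $\ell$ per element per threshold. That is $O(n\log(1/\varepsilon)/\varepsilon^2)$ queries, with costs bounded by $O(1/\varepsilon)B$ when the budget relaxation is set to $\beta=O(1/\varepsilon)$ so that termination forces $f(S_i)\ge f(O)/2$.
\end{enumerate}

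The main obstacle is the bookkeeping in the terminated case for item 1. We must show that a set which overflowed to cost $\beta B$ has value at least $(1/3-\varepsilon)f(O)$, while $\Gamma$ only estimates $f(O)$ within a constant. I would first try fixing the starting threshold at $\Gamma/B$ with $f(O)\le 4\Gamma$-style guarantees and pick $\beta$ so the constants close exactly at $12$. If that fails, I would run the procedure for $O(1)$ shifted starting thresholds, which adds only a constant factor to the query count.
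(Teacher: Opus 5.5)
\textbf{Gap: elements of $O$ absorbed into the other greedy sets.} Your plan never controls these, and this cross term is exactly what separates $1/4$ from $1/3$ and $1/2$.

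\emph{Item 1.} Summing your inequality $f(S_i\cup O)\le(2+\varepsilon)f(S_i)+f(S_{3-i})$ over $i$ and using $f(O)\le f(S_1\cup O)+f(S_2\cup O)$ gives $f(O)\le(3+\varepsilon)(f(S_1)+f(S_2))$. That is only $\max_i f(S_i)\ge f(O)/(6+2\varepsilon)$, not $1/3$. What twin greedy actually supports is
$f(S_i\cup O)-f(S_i)\le\sum_{o\in O\cap S_{3-i}}f(o\mid (S_{3-i})_o)+(\text{saturation and low-threshold losses})$.
The sum here can only be bounded by $f(S_{3-i})$, so you get $f(O)\le(2+o(1))(f(S_1)+f(S_2))$, i.e.\ ratio $1/4$. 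Relaxing the budget removes the saturation loss, not the cross term.

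\emph{Item 2.} Here the approach fails outright. With $\ell$ parallel sets and max-density assignment, the cross term for $S_i$ is a sum over \emph{all} other sets, not an average. Take $V=O=\{o_1,\dots,o_\ell\}$, $c(o_j)=B/\ell$, and $f(S)=|S|-\delta\binom{|S|}{2}$ with $0<\delta\le1/\ell^2$. Each $o_j$ has strictly larger marginal into an empty set than into a nonempty one. So every $S_j$ ends as a singleton with $f(S_j)=1$, while $f(O)\ge\ell-1/2$: returning the best $S_j$ gives ratio about $\varepsilon$. The identity $\sum_i f(S_i\cup O)\ge(\ell-1)f(O)$ helps only when $f(S_i\cup O)-f(S_i)$ is small, which is exactly what fails.

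\textbf{How the paper handles it.} First, it builds $A_1,\dots,A_\ell$ \emph{sequentially}, each from $V\setminus A$. Every element of $\hat O=O\setminus A$ was then available to every $A_i$ and is handled purely by the threshold argument: $f(\hat O\cup A_i)-f(A_i)\le\frac{B}{B'}\cdot\frac{f(A_i)}{1-\varepsilon}+\varepsilon f(O)$. Second, it recovers $\dot O=O\cap A$ by running DetUSM on ground set $A$ with $g_i(D)=f(A_i\cup D)$. This gives $f(\dot O\cup A_i)\le\frac{2}{1-2\varepsilon}f(D_i\cup A_i)-\frac{1}{2-4\varepsilon}f(A_i)$. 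Combine this with $f(O\cup A_i)\le f(\hat O\cup A_i)-f(A_i)+f(\dot O\cup A_i)$ and pick $i$ with $f(O\cup A_i)\ge(1-1/\ell)f(O)$. The choice $\ell=6$, $B'=B$ gives $1/3-\varepsilon$. The choice $\ell=\lceil1/\varepsilon\rceil$, $B'=2B$ gives $1/2-2\varepsilon$, since the net coefficient of $f(A_i)$ becomes nonpositive.

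\textbf{Further errors.}
\begin{enumerate}
\item In your termination case 1(a), a set can overflow at a threshold as low as about $\varepsilon\Gamma/B$. Then $f(S_i)\ge\tau\beta B$ does not give $f(O)/3$ for constant $\beta$, and shifting the starting thresholds does not fix this. The correct treatment charges the unselected elements of $O$: their densities are at most $\tau/(1-\varepsilon)$, so the loss is at most $f(S_i)/(\beta(1-\varepsilon))$. This is Case~1 of Lemma~\ref{lem:bicriteria-marginal}.
\item With $\beta=12$ your cost bound is $13B$, not $12B$. In the paper the constant is $\ell(B'+B)=6\cdot2B$, because the output is a subset of $A$.
\end{enumerate}
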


\begin{table}[t]
\centering
\caption{Bicriteria approximation algorithms for non-monotone SMK.}
\label{tab:bicriteria-comparison}
\begin{tabular}{lcccc}
\hline
Algorithm & Type & Ratio (\(\alpha\)) & Budget violation (\(\beta\)) & Query complexity \\
\hline
Crawford~\cite{Crawford23} & Det. & \(1/2-\varepsilon\) & \(O(1/\varepsilon^{2})\) & \(O(n^2/\varepsilon^3)\) \\
Feldman \& Kuhnle~\cite{FeldmanK25} & Det. & \(1/2-\varepsilon\) & \(O(1/\varepsilon)\) & \(O(n^2/\varepsilon)\) \\
\textbf{This paper} & Det. & \(1/3-\varepsilon\) & \(12\) & \(O(n\log(1/\varepsilon)/\varepsilon)\) \\
\textbf{This paper} & Det. & \(1/2-\varepsilon\) & \(O(1/\varepsilon)\) & \(O(n\log(1/\varepsilon)/\varepsilon^2)\) \\
\hline
\end{tabular}
\end{table}

\subsection{Paper Organization}

The remainder of this paper is organized as follows. Section~\ref{sec:pre} introduces preliminaries and notation. Section~\ref{sec:small_c(r)} handles the case where the largest element in the optimum has a moderate cost. Section~\ref{sec:large_c(r)} addresses the complementary case. Section~\ref{sec:conclusion} concludes.

\section{Preliminaries}
\label{sec:pre}

Let \(V\) be a finite ground set of size \(n\).
For a set \(S\subseteq V\) and an element \(e\in V\), we write \(S+e\) for \(S\cup\{e\}\) and \(S-e\) for \(S\setminus\{e\}\), respectively.
The marginal contribution of an element \(e\) to a set \(S\) is denoted by \(f(e\mid S)= f(S+e)-f(S)\), and the marginal contribution of a set \(T\) to \(S\) is denoted by \(f(T\mid S)= f(S\cup T)-f(S)\).
A set function \(f:2^V\to\mathbb R\) is \emph{submodular} if for every \(S\subseteq T\subseteq V\) and every \(e\in V\setminus T\),
\[
        f(S+e)-f(S)\ge f(T+e)-f(T).
\]
Equivalently, \(f\) satisfies \(f(S)+f(T)\ge f(S\cup T)+f(S\cap T)\) for all \(S,T\subseteq V\).
The function is \emph{monotone} if \(f(S) \le f(T)\) whenever \(S \subseteq T\), and \emph{non-negative} if \(f(S) \ge 0\) for all \(S \subseteq V\).
Each element \(e\in V\) has a positive cost \(c(e)>0\), and for a set \(S\subseteq V\) we define \(c(S)=\sum_{e\in S}c(e)\).
A set \(S\) is \emph{feasible} if \(c(S) \le B\), where \(B > 0\) is the given budget.
We assume w.l.o.g.~that \(c(e)\le B\) for every \(e\in V\), since otherwise the element is infeasible and can be safely discarded.

The problem of \emph{submodular maximization under a knapsack constraint (SMK)} can be formulated as
\[ \max\{f(S): S\subseteq V,\ c(S)\le B\}. \]
Here \(f\) is non-negative and submodular, but not necessarily monotone.
We assume that \(f\) is accessed through a value oracle that returns \(f(S)\) when a set \(S\subseteq V\) is queried.
We measure the efficiency of the algorithm by query complexity, i.e., the number of oracle queries it makes.

The algorithm presented in this paper invokes two subroutines.
The first is a linear-query algorithm that estimates the optimal value of SMK within a constant factor.

\begin{lemma}[\cite{PhamT0T23}]
\label{lem:LA-estimate}
 For SMK, there is a deterministic algorithm \emph{LA} that uses \(O(n)\) queries and returns a value \(\Gamma\) satisfying \(\Gamma\leq \max_{T\subseteq V, c(T)\leq B} f(T)\leq 19\Gamma\).
\end{lemma}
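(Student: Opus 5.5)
The plan is to analyze a one-pass, two-set threshold greedy of the kind used by Pham et al. The estimate $\Gamma$ is the value of some feasible set the algorithm builds, so the inequality $\Gamma\le \max_{c(T)\le B} f(T)$ holds automatically. All the work is in the upper bound $f(O)\le 19\Gamma$, where $O$ is an optimal solution.

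\textbf{Algorithm.} Maintain two disjoint sets $X,Y$, both initially empty, and scan $V$ once in an arbitrary order. For each element $e$:
\begin{itemize}
\item compute the densities $f(e\mid X)/c(e)$ and $f(e\mid Y)/c(e)$;
\item let $S\in\{X,Y\}$ be the set giving the larger density;
\item add $e$ to $S$ if $f(e\mid S)/c(e)\ge f(S)/B$.
\end{itemize}
This costs $O(n)$ queries. After the pass, let $X'$ be the maximal suffix of $X$ in insertion order with $c(X')\le B$, and define $Y'$ the same way. Let $e^*$ be a maximum-value singleton. Output $\Gamma=\max\{f(X'),f(Y'),f(e^*)\}$. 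Evaluating the suffixes along the way keeps the query count at $O(n)$.

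\textbf{Step 1: bound $f(O)$ by $f(X)$ and $f(Y)$.} Since $X\cap Y=\emptyset$, nonnegativity and submodularity give $f(O\cup X)+f(O\cup Y)\ge f(O)$. I will bound $f(O\cup X)-f(X)\le\sum_{o\in O\setminus X} f(o\mid X)$ by splitting $O\setminus X$ into two groups.
\begin{itemize}
\item If $o$ was rejected, then by submodularity its density with respect to the final $X$ is below $f(X_o)/B\le f(X)/B$, where $X_o$ is $X$ at the time $o$ was scanned. Because $c(O)\le B$, these elements contribute at most $f(X)$ in total.
\item If $o$ was placed in $Y$, its marginal to $X$ at that time was at most its marginal to $Y$. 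Summing these over $O\cap Y$ and telescoping bounds the contribution by $f(Y)$.
\end{itemize}
The symmetric argument for $f(O\cup Y)$ then yields $f(O)\le 3\bigl(f(X)+f(Y)\bigr)$, up to the exact bookkeeping.

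\textbf{Step 2: bound $f(X)$ by $f(X')$ and $f(e^*)$.} The acceptance rule gives $f(X+e)\ge f(X)(1+c(e)/B)$, so $f$ grows geometrically in accumulated cost. I will use this to show that the elements before the suffix $X'$ contribute only a constant fraction of $f(X)$. If $X'\ne X$, the suffix has cost exceeding $B-c(e)\ge 0$ once the next element $e$ is included. Handling that boundary element through $f(e^*)$, the prefix contributes at most a $(1+1)^{-1}$-type fraction. Submodularity gives $f(\text{suffix}\mid\text{prefix})\le f(\text{suffix})$, which leads to a bound of the form $f(X)\le a f(X')+b f(e^*)$ for small constants $a,b$. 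The same holds for $Y$.

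Combining Step 1 with Step 2 for both sets gives $f(O)\le C\Gamma$ for an explicit constant $C$, and the target is $C\le 19$.

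\textbf{Main obstacle.} The hard part is Step 2 with tight constants. The geometric-growth argument must control both the cost overshoot of the boundary element and non-monotonicity, since removing the prefix could in principle lower the value. I would first handle the boundary element by charging it to $e^*$. I would then tune the acceptance threshold, for instance $f(S)/B$ versus $f(S)/(2B)$, so that the arithmetic yields exactly the factor $19$.
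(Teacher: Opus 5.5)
The paper does not prove this lemma; it imports it from \cite{PhamT0T23}. Your architecture is the right one: a one-pass twin greedy with threshold \(f(S)/B\), maximal feasible suffixes, and the best singleton. As set up, however, the arithmetic cannot produce \(19\).

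\textbf{A slip in Step 1.} The rejection test is applied only to the argmax set \(Z_o\). If \(Z_o=Y_o\), you only know \(f(o\mid X_o)\le f(o\mid Y_o)<c(o)f(Y_o)/B\), not the bound in terms of \(f(X_o)\). Let \(M=\max\{f(X),f(Y)\}\). The correct bound is \(f(O)\le 2(f(X)+f(Y))+2M\le 6M\), not \(3(f(X)+f(Y))\). The two agree in max form, so this is harmless, but the max form is the one you must carry forward.

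\textbf{The real gap is Step 2.} Let \(P\) be the prefix before the boundary element \(e\). Growth gives \(f(X)>2f(P)\), and submodularity gives \(f(X)\le f(P)+f(e)+f(X')\). Hence \(f(X)\le 2f(X')+2f(e^*)\le 4\Gamma\), and the total is \(C=24\).

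Tuning the threshold does not rescue this. With threshold \(f(S)/(\theta B)\), Step 1 becomes \((4+2/\theta)M\) and Step 2 becomes \(f(X)\le 2(1+\theta)\Gamma\). Their product is \(12+8\theta+4/\theta\ge 12+8\sqrt{2}>23\).

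The obstruction is structural. An expensive boundary element lets \(X'\) be arbitrarily cheap, so growth across \(X'\) buys almost nothing and you pay for an extra singleton.

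\textbf{The missing idea is to separate large elements.} Let \(V_1=\{e:c(e)>B/2\}\) and \(V_2=V\setminus V_1\).
\begin{enumerate}
\item Since \(c(O)\le B\), we have \(|O\cap V_1|\le 1\), so submodularity and non-negativity give \(f(O)\le f(e^*)+f(O\cap V_2)\).
\item Run the twin greedy only on \(V_2\). The boundary element now costs at most \(B/2\), so \(X'\ne X\) forces \(c(X')>B/2\).
\item Growth then gives \(f(X)\ge\frac{3}{2}f(X\setminus X')\). Combined with \(f(X)\le f(X\setminus X')+f(X')\), this yields \(f(X)\le 3f(X')\), with no singleton term.
\item Step 1 applied to \(O\cap V_2\) gives \(f(O\cap V_2)\le 6\cdot 3\Gamma\), hence \(f(O)\le 19\Gamma\).
\end{enumerate}
The split \(19=1+18\) is, to my recollection, how the constant arises in the cited LA analysis.

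Your constant \(24\) would serve the paper equally well after replacing \(19\) by \(24\) in the initial thresholds. It does not, however, prove the lemma as stated.
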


The second subroutine is a linear-query algorithm for solving the problem of \emph{unconstrained submodular maximization (USM)}, namely \(\max\{f(S): S\subseteq V\}\).

\begin{lemma}[\cite{BuchbinderF18}]
    \label{lem:USM}
    For USM, there exists a deterministic algorithm \emph{DetUSM} that uses \(O(n/\varepsilon)\) queries and returns a set \(S\subseteq V\) that satisfies \(f(S)\geq (1/4)\cdot f(\varnothing)+(1/2-\varepsilon)\cdot \max_{T\subseteq V} f(T)\).
\end{lemma}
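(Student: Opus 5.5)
The plan is to derandomize the randomized double greedy of Buchbinder et al.\ for USM. To get linear query complexity, we keep the number of simultaneously tracked ``states'' bounded by $O(1/\varepsilon)$ instead of letting it grow with $n$.

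\textbf{Randomized template.} Order the elements $u_1,\dots,u_n$ and maintain a pair $X_i\subseteq Y_i$ with $X_0=\varnothing$ and $Y_0=V$. At step $i$ set $a_i=f(u_i\mid X_{i-1})$ and $b_i=-f(u_i\mid Y_{i-1}-u_i)$; submodularity gives $a_i+b_i\ge 0$. Element $u_i$ is added to $X$ with probability $p_i=a_i^+/(a_i^++b_i^+)$, and otherwise removed from $Y$. Put $O_i=(O\cup X_i)\cap Y_i$. The standard per-step inequality is
\[
\mathbb{E}\bigl[f(O_{i-1})-f(O_i)\bigr]\le \tfrac12\,\mathbb{E}\bigl[f(X_i)-f(X_{i-1})+f(Y_i)-f(Y_{i-1})\bigr].
\]
Summing over $i$ gives $f(O)-\mathbb{E}[f(X_n)]\le \tfrac12\bigl(2\mathbb{E}[f(X_n)]-f(\varnothing)-f(V)\bigr)$. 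Since $f(V)\ge 0$, this yields $\mathbb{E}[f(X_n)]\ge \tfrac12 f(O)+\tfrac14 f(\varnothing)$, which is exactly the shape of the claimed bound with $\varepsilon=0$.

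\textbf{Derandomization.} We maintain an explicit distribution $\mathcal D_i$ over pairs $(X,Y)$. Each state is split deterministically into its ``add'' and ``remove'' children. The key observation is that the per-step inequality is linear in the weights of $\mathcal D_i$. It only needs to hold for the (unknown) $O$ through quantities computable per state: for each state, the left-hand side is bounded by $\max\{(1-q)a,\,q\,b\}$ in terms of the state's own $a,b$, where $q$ is the probability assigned to adding. The guarantee then holds for every $O$ simultaneously.

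To keep the support small, we restrict the splitting probabilities $q$ to a grid of granularity $\varepsilon$. Rounding $p_i$ to the grid costs at most an $\varepsilon(a_i^++b_i^+)$ term in the per-step inequality. States whose $(X,Y)$ arise from the same sequence of grid choices are then merged. We then re-solve a small LP whose constraints are the aggregated inequality and normalization. A basic feasible solution has support bounded by the number of constraints plus the grid size, i.e.\ $O(1/\varepsilon)$. Each step then costs $O(1/\varepsilon)$ queries, for $O(n/\varepsilon)$ overall.

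\textbf{Main obstacle.} The hardest step is bounding the accumulated rounding and sparsification loss by $\varepsilon f(O)$ rather than something like $\varepsilon\sum_i(a_i^++b_i^+)$, which could be much larger. My first attempt is a charging argument via telescoping: $\sum_i a_i^+\le f(X_n)-f(\varnothing)+\text{(negative parts)}$, and similarly for the $b_i^+$ with $Y$. This would bound $\sum_i(a_i^++b_i^+)$ by $O(f(O))$ plus terms already accounted for in the potential. The total loss then becomes $O(\varepsilon)f(O)$, and rescaling $\varepsilon$ finishes the proof. Finally, we output the best $X_n$ in the support of $\mathcal D_n$, whose value is at least the expectation.
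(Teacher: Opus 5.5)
The paper does not prove this lemma; it imports it from Buchbinder--Feldman, so your argument has to stand on its own. Your randomized template is correct. An explicit distribution updated through an extreme point of a linear system is also the right kind of derandomization. The gap is in the step that is supposed to deliver the \(O(n/\varepsilon)\) bound, namely keeping only \(O(1/\varepsilon)\) states; it does not work.

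Rounding \(q\) to an \(\varepsilon\)-grid does not reduce branching. Any state with \(q\notin\{0,1\}\) still produces two distinct children \((X+u_i,Y)\) and \((X,Y-u_i)\). The number of distinct pairs is governed by add/remove histories, not by which grid values were used. Two different pairs cannot be merged for free, because they carry different, unknown values of \(f((O\cup X)\cap Y)\), and that is exactly the term your potential tracks.

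The same issue undermines the re-solved LP. To keep the telescoping of \(\mathbb{E}[f(O_i)]\) valid, each parent's mass must be split between its own two children. That means you need \(z_s+w_s=p_s\) for every current state \(s\), together with the two aggregated inequalities (one for \(u_i\in O\), one for \(u_i\notin O\)). With \(m\) states this is \(m+2\) constraints, so an extreme point may have \(m+2\) nonzeros. The support therefore grows by up to two per step, giving up to \(2n+1\) states and \(O(n^2)\) queries (the exact-\(1/2\) regime), not \(O(n/\varepsilon)\). Your count ``constraints plus grid size \(=O(1/\varepsilon)\)'' only holds if the per-state constraints are dropped, and then the re-weighting is no longer justified.

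The loss accounting is also wrong as stated. \(\sum_i a_i^+\) does not telescope along a trajectory, because \(a_i\) enters \(f(X_n)-f(\varnothing)\) only when \(u_i\) is added. On a path that removes every element, \(\sum_i a_i^+=\sum_{u}(f(\{u\})-f(\varnothing))^+\) while \(f(X_n)=f(\varnothing)\).

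The valid bound is in expectation and per state. For \(q=a^+/(a^++b^+)\) one has \(qa+(1-q)b\ge (a^++b^+)/2\), and this survives \(\varepsilon\)-rounding of \(q\) up to a factor \(1-2\varepsilon\). The total expected gain is at most \(2\max_T f(T)\). So the grid rounding by itself costs only \(O(\varepsilon)\max_T f(T)\), but it does nothing for the support size. Moreover, for an arbitrary LP extreme point the per-state gain inequality need not hold unless you impose it as a constraint.

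What is missing is a genuine mechanism that caps the number of maintained states (or the per-step query cost) at \(O(1/\varepsilon)\) while keeping control of the unknown potential. As written, your argument supports at best the \(O(n^2)\)-query version.
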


Let \(O\) be an optimal solution of SMK.
Let \(r\in\arg\max_{o\in O}c(o)\) be an element with the highest cost in \(O\), and define \(O'=O-r\).
Given the parameter \(\varepsilon>0\), our algorithm splits into two parts, depending on whether \(c(r)\) exceeds \((1-\varepsilon)B\).
We introduce them in Sections \ref{sec:small_c(r)} and \ref{sec:large_c(r)}, respectively.

\section{The Algorithm for \texorpdfstring{\(c(r)\leq (1-\varepsilon)B\)}{c(r) <= (1-epsilon)B}}
\label{sec:small_c(r)}

In this section, we present a linear-query deterministic algorithm for SMK that achieves a \((1/4-\varepsilon)\)-approximation, assuming that \(c(r)\leq (1-\varepsilon)B\).

The algorithm is presented as Algorithm~\ref{alg:small-r-fixed}.
It first guesses the residual budget \(B-c(r)\) via a geometric sequence \(\varepsilon (1+\varepsilon)^{i} B\) for \(i = 0,1,\ldots, \left\lfloor\log_{1+\varepsilon}\varepsilon^{-1}\right\rfloor\).
For each guess \(i\), the algorithm maintains two disjoint partial solutions \(X\) and \(Y\), initialized as empty sets, along with their \emph{saturation flags} \(\delta_X = \delta_Y = 0\).
It then executes two phases: a threshold-search phase and an augmentation phase.

In the threshold-search phase, the algorithm iteratively lowers a density threshold \(\tau\), starting from \(19\Gamma/(4\varepsilon B)\) and updating \(\tau \gets (1-\varepsilon)\tau\) until \(\tau\leq \varepsilon(1-\varepsilon)\cdot \Gamma/B\).
For each value of \(\tau\), the algorithm scans all unselected elements \(e \in V' \setminus (X \cup Y)\), where \(V'=\{e \in V: c(e) \le (1-\varepsilon(1+\varepsilon)^i)B\}\), and attempts to assign \(e\) to one of \(\{X, Y\}\).
For each element \(e\), it identifies a set \(Z \in \mathcal{C} = \{X, Y\}\) satisfying \(f(e \mid Z)/c(e) \ge \tau\), breaking ties by selecting the set maximizing the marginal density. If such a set exists, the algorithm checks whether adding \(e\) would cause \(Z\) to exceed the guessed residual budget, i.e.~\(c(Z+e) > \varepsilon(1+\varepsilon)^i B\). If so, \(Z\) is removed from the candidate pool \(\mathcal{C}\), its saturation flag \(\delta_Z\) is set to \(1\), and immediately considers the same
element for the remaining set in \(\mathcal C\); otherwise, \(e\) is added to \(Z\). Once a set is saturated, it no longer participates in the threshold search but is retained for the final augmentation phase.

After the threshold loop terminates, each candidate solution undergoes a final augmentation.
For \(Z \in \{X, Y\}\), the algorithm identifies an element \(e^*=\arg\max_{e \in V',\, f(e \mid Z) \ge 0} f(e \mid Z)\)
(if no such element exists, \(e^* = \varnothing\)) and forms \(Z^+ = Z + e^*\).
The global solution \(S^*\) is updated to retain the maximum among \(f(S^*)\), \(f(X^+)\), and \(f(Y^+)\).

\begin{algorithm}[tb]
\SetAlgoLined
\DontPrintSemicolon
\caption{Threshold Twin Greedy \(+\) Augmentation}
\label{alg:small-r-fixed}
\KwIn{Instance \((V,f,c,B)\), value \(\Gamma\) satisfying \(\Gamma\le f(O)\le 19\Gamma\), parameter \(\varepsilon\).}
\(S^*\gets \varnothing\).

\For{\(i=0\) \KwTo \(\lfloor\log_{1+\varepsilon}\varepsilon^{-1}\rfloor\)}{
    Let \(X,Y\gets\varnothing\), \(\mathcal{C}\gets\{X,Y\}\), and \(\delta_X=\delta_Y=0\).

    Let \(\tau\gets 19\Gamma/(4\varepsilon B)\) and \(V'\gets \{e\in V: c(e)\leq \left(1-\varepsilon(1+\varepsilon)^{i}\right)B\}\).
	
	\While{\(\tau>\varepsilon(1-\varepsilon)\cdot \Gamma/B\)}{
		\ForEach{\(e\in V'\setminus(X\cup Y)\)}{
			\ForEach{\(Z\in\mathcal{C}\) in nonincreasing order of
    \(f(e\mid Z)/c(e)\)\label{line:find_Z}}{
    \If{\(f(e\mid Z)/c(e)\geq\tau\)}{
        \eIf{\(c(Z+e)>\varepsilon(1+\varepsilon)^iB\)}{
            \(\mathcal{C}\gets\mathcal{C}\setminus\{Z\}\)
            and \(\delta_Z\gets1\).\;
        }{
            \(Z\leftarrow Z+e\).\;
            \textbf{break}\;
        }
    }
}
		}
		Update \(\tau\gets(1-\varepsilon)\tau\).
	}
    \ForEach{\(Z\in\{X,Y\}\)}{
        Let \(e^* \gets \arg\max_{e \in V',\; f(e \mid Z)\geq 0} f(e \mid Z)\). \tcp*{if no such \(e\), \(e^{*}=\varnothing\)}
            
        Let \(Z^{+} \gets Z+e^*\).
    }
    \(S^*\gets \arg\max\{f(S^*), f(X^{+}), f(Y^{+})\}\).
}
\Return \(S^*\).
\end{algorithm}

\subsection{The Analysis}

Recall that we denote by \(O\) an optimal solution, \(r=\arg\max\{c(o):o\in O\}\) and \(O'=O-r\).
We make two remarks about the algorithm.
First, under the condition \(c(r)\leq (1-\varepsilon)B\), there exists an index \(i\in\{0,1,\ldots,\lfloor\log_{1+\varepsilon}\varepsilon^{-1}\rfloor\}\) such that
\begin{equation}
    \varepsilon(1+\varepsilon)^{i}B\leq B-c(r)< \varepsilon(1+\varepsilon)^{i+1}B. \label{eq:guess-i}
\end{equation}
This follows from two observations: (i) \(c(r) \le (1-\varepsilon)B\) implies \(B - c(r) \ge \varepsilon B\), so \(i \ge 0\); (ii) if \(i > \log_{1+\varepsilon}\varepsilon^{-1}\) then \(\varepsilon(1+\varepsilon)^i B > B\), contradicting \(B - c(r) < B\). Hence \(i \le \log_{1+\varepsilon}\varepsilon^{-1}\), and since \(i\) is integral, \(0 \le i \le \lfloor\log_{1+\varepsilon}\varepsilon^{-1}\rfloor\).
Second, for every \(e \in O\), we have \(c(e) \le c(r) \le (1-\varepsilon(1+\varepsilon)^i) B\), and thus \(O \subseteq V'\).
Consequently, it suffices to consider only the elements in \(V'\) without loss of optimality.
In what follows, we restrict our analysis to the pair of sets \(X, Y\) constructed for the guess of \(i\) satisfying Eq.~\eqref{eq:guess-i}.

We now establish the notation.
Let \(X, Y, \delta_X, \delta_Y\) denote their final values.
Suppose \(X\) contains \(p\) elements and \(Y\) contains \(q\) elements, and define \(X_i = \{x_1,\ldots,x_i\}\) and \(Y_i = \{y_1,\ldots,y_i\}\) as the prefixes consisting of the first \(i\) elements added to \(X\) and \(Y\), respectively, so that \(X = X_p\) and \(Y = Y_q\).
If \(\delta_X = 1\), the algorithm attempted to add an element \(x_{p+1}\) to \(X_p\) but failed because \(c(X_p \cup \{x_{p+1}\}) > \varepsilon(1+\varepsilon)^i B\). We define \(X_{p+1} = X_p \cup \{x_{p+1}\}\); note that \(x_{p+1} \in V'\), so \(X_{p+1}\) is a valid candidate in the augmentation stage and therefore also a candidate for \(X^+\) and \(S^*\).
The symmetric definitions for \(\delta_Y = 1\) are analogous.

For each \(e\in X_p\cup Y_q\), let \(X_e\) and \(Y_e\)
denote the states of \(X\) and \(Y\), respectively,
immediately before \(e\) was inserted into its chosen
solution.
If \(\delta_X=1\), additionally set
\(X_{x_{p+1}}=X_p\); if \(\delta_Y=1\), additionally
set \(Y_{y_{q+1}}=Y_q\).

Finally, let \(Y'_p\) be the state of \(Y\) immediately before \(\delta_X\) was set to \(1\); if \(\delta_X = 0\) throughout, we set \(Y'_p = Y_q\).
The definition of \(X'_q\) is analogous.

We begin the analysis with two elementary observations.
Since \(f\) is not monotone, neither \(f(O \cup X_p)\) nor \(f(O \cup Y_q)\) is guaranteed to exceed \(f(O)\).
Nevertheless, Observation~\ref{obs:upper-bound-of-f(O)} establishes that their sum dominates \(f(O)\), enabling us to circumvent the technical difficulties arising from the lack of monotonicity.

\begin{observation}
\label{obs:upper-bound-of-f(O)}
\(f(O)\leq f(O\cup X_p)+f(O\cup Y_q)\).
\end{observation}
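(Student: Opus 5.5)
The plan is to apply the lattice form of submodularity, \(f(S)+f(T)\ge f(S\cup T)+f(S\cap T)\), to the two sets \(S=O\cup X_p\) and \(T=O\cup Y_q\). We then need only two facts. The first is non-negativity of \(f\). The second is that the two partial solutions are disjoint, which holds by construction of Algorithm~\ref{alg:small-r-fixed}.

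First, we record \(X_p\cap Y_q=\varnothing\). Each element \(e\) is scanned only from \(V'\setminus(X\cup Y)\). Once \(e\) is inserted into some \(Z\in\mathcal C\), the inner loop breaks, so \(e\) is added to at most one of the two sets. It is never considered again in later rounds of the threshold loop, because it then lies in \(X\cup Y\). The augmentation phase produces \(X^+,Y^+\) but does not modify \(X_p\) or \(Y_q\), and these are the final values used in the statement. Hence
\[
(O\cup X_p)\cap(O\cup Y_q)=O\cup(X_p\cap Y_q)=O,
\]
by distributivity.

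Next, submodularity gives
\[
f(O\cup X_p)+f(O\cup Y_q)\ \ge\ f(O\cup X_p\cup Y_q)+f\big((O\cup X_p)\cap(O\cup Y_q)\big)=f(O\cup X_p\cup Y_q)+f(O).
\]
Since \(f\) is non-negative, \(f(O\cup X_p\cup Y_q)\ge 0\), and dropping this term yields \(f(O)\le f(O\cup X_p)+f(O\cup Y_q)\), which is the claim.

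The only step that needs any care is the disjointness of \(X_p\) and \(Y_q\), and it follows directly from the algorithm's structure: elements are drawn only from \(V'\setminus(X\cup Y)\), and the \textbf{break} fires after a successful insertion. Note that we do not need the extended sets \(X_{p+1}\) or \(Y_{q+1}\) here. Those might overlap, since the rejected element \(x_{p+1}\) could later be added to \(Y\), but they do not appear in Observation~\ref{obs:upper-bound-of-f(O)}.
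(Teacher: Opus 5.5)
Your proof is correct and is the same as the paper's: apply the lattice form of submodularity to \(O\cup X_p\) and \(O\cup Y_q\), use \(X_p\cap Y_q=\varnothing\) to identify their intersection as \(O\), and drop the non-negative term \(f(O\cup X_p\cup Y_q)\). The paper only asserts disjointness, so your derivation of it from the algorithm's structure, via the \textbf{break} and the scan over \(V'\setminus(X\cup Y)\), is a useful addition.
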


\begin{proof}
Note that \(X_p\cap Y_q=\varnothing\).
By submodularity, \(f(O\cup X_p)+f(O\cup Y_q)\geq f(O\cup X_p\cup Y_q)+f(O)\).
Since \(f\) is non-negative, \(f(O\cup X_p\cup Y_q)\geq 0\). 
The observation follows.
\end{proof}

To establish the desired approximation ratio, we must bound \(f(O \cup S) - f(S)\) from above for the feasible candidates \(S \in \{X_p, Y_q\}\).
using the fact that \(X_p + r\) and \(Y_q + r\) are candidate solutions during augmentation (recall that \(c(X_p + r), c(Y_q + r)\leq B\)), Observation \ref{obs:S+r} allows us to reduce the problem to bounding \(f(O' \cup S) - f(S)\), which is substantially more tractable.

\begin{observation}
\label{obs:S+r}
For any \(S\subseteq V\), \(f(O\cup S)-f(S+r)\leq f(O'\cup S)-f(S)\).
\end{observation}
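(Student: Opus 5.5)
The plan is to rearrange the claimed inequality into a single instance of the lattice form of submodularity, \(f(A)+f(B)\ge f(A\cup B)+f(A\cap B)\), recalled in Section~\ref{sec:pre}. Moving terms across, the statement is equivalent to
\[
f(O\cup S)+f(S)\le f(O'\cup S)+f(S+r),
\]
so I will take \(A=O'\cup S\) and \(B=S+r\), and check that \(A\cup B=O\cup S\) and \(A\cap B=S\).

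For the union: since \(O=O'+r\), we have \(A\cup B=O'\cup S\cup\{r\}=O\cup S\). For the intersection: \(A\cap B=(O'\cup S)\cap(S\cup\{r\})=S\cup(O'\cap\{r\})\). By definition \(O'=O-r\), so \(r\notin O'\). Hence \(O'\cap\{r\}=\varnothing\) and \(A\cap B=S\). This holds whether or not \(r\in S\).

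Substituting these into the submodularity inequality gives \(f(O'\cup S)+f(S+r)\ge f(O\cup S)+f(S)\), which is exactly the rearranged claim. There is no real obstacle. The only point needing care is the intersection computation, where \(r\notin O'\) is what makes it equal \(S\).

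As a sanity check, the degenerate case \(r\in S\) also works directly. Then \(S+r=S\) and \(O\cup S=O'\cup S\), so both sides of the observation are equal.
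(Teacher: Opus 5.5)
Your proof is correct and is essentially the paper's argument. The paper splits on whether \(r\in S\) and, when \(r\notin S\), applies diminishing returns to the same pair \(O'\cup S\) and \(S+r\). Your use of the lattice form together with \(r\notin O'\) handles both cases at once, so the separate \(r\in S\) check is only a sanity check.
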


\begin{proof}
If \(r\in S\), then \(O\cup S=O'\cup S\) and \(S+r=S\), and hence both sides are equal.
If \(r\notin S\), since \(S\subseteq S+r\), the submodularity gives
\[
        f(O\cup S)-f(S+r)=f(O'\cup S+r)-f(S+r)\leq f(O'\cup S)-f(S).\qedhere
\]
\end{proof}

Next, we bound \(f(O'\cup S)-f(S)\) for \(S \in \{X_p, Y_q\}\).
We focus on the case \(S = X_p\); the argument for \(Y_q\) is symmetric.
By submodularity, \(f\left( O^{\prime} \cup X_p\right)-f\left( X_p\right) \leq \sum_{e \in O^{\prime} \setminus X_p} f\left( e \mid X_p\right)\).
To bound this sum, we partition \(O' \setminus X_p\) into two disjoint subsets: \(O' \cap Y'_p\) and \(O' \setminus (X_p \cup Y'_p)\), and bound each part separately.
Lemma~\ref{lem:aux1} handles the sum over \(O^{\prime} \cap Y'_p\).
Lemmas~\ref{lem:not_added} and \ref{lem:added_is_larger} address the sum over \(e \in O^{\prime} \setminus\left(X_p \cup Y'_p\right)\), depending on whether \(X\) becomes saturated: the former applies when \(\delta_X = 0\), and the latter when \(\delta_X = 1\).

\begin{lemma}
	\label{lem:aux1}
    The following holds.
    \begin{enumerate}
    	\item \(\sum_{e \in O'\cap Y'_p} f(e\mid X_p)\leq \sum_{e \in O'\cap Y'_p} f(e\mid Y_e)\).
    	\item \(\sum_{e \in O'\cap X'_q} f(e\mid Y_q)\leq \sum_{e \in O'\cap X'_q} f(e\mid X_e)\).
    \end{enumerate}
\end{lemma}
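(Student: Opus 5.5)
We prove item 1 term by term; item 2 is symmetric, with the roles of \(X\) and \(Y\) exchanged. Fix \(e\in O'\cap Y'_p\). By definition of \(Y'_p\), the element \(e\) was inserted into \(Y\) at a moment when \(\delta_X\) was still \(0\). Hence \(X\in\mathcal C\) when \(e\) was processed, and \(Y_e\) is the state of \(Y\) just before that insertion. The plan is to show \(f(e\mid X_p)\le f(e\mid Y_e)\) for each such \(e\) and then sum over \(O'\cap Y'_p\).

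The key step is to compare \(e\)'s marginal gain to \(X\) at its insertion time with its gain to \(Y\). When \(e\) was handled, both \(X\) (state \(X_e\)) and \(Y\) (state \(Y_e\)) were in \(\mathcal C\). The inner loop visits the sets in nonincreasing order of \(f(e\mid Z)/c(e)\), and \(e\) ended up in \(Y\). We consider two cases.

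\emph{Case (a): \(Y\) came first in the order.} Then \(f(e\mid Y_e)\ge f(e\mid X_e)\) directly.

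\emph{Case (b): \(X\) came first.} Then either \(X\) failed the threshold test, or \(X\) passed the test but was saturated (removed from \(\mathcal C\)). The second option would set \(\delta_X=1\) before \(e\) entered \(Y\), contradicting \(e\in Y'_p\). So \(X\) came first yet failed the test, i.e. \(f(e\mid X_e)/c(e)<\tau\). But \(X\) came first, so \(f(e\mid X_e)\ge f(e\mid Y_e)\), which gives \(f(e\mid Y_e)/c(e)<\tau\) as well. Then \(Y\) would also fail the test and \(e\) could not have been inserted into \(Y\). This is a contradiction, so case (b) is impossible.

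Hence in all cases \(f(e\mid X_e)\le f(e\mid Y_e)\). Since \(X_e\subseteq X_p\) and \(e\notin X_p\) (the sets are disjoint), submodularity gives \(f(e\mid X_p)\le f(e\mid X_e)\le f(e\mid Y_e)\). Summing over \(e\in O'\cap Y'_p\) proves item 1.

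The main obstacle is the tie-breaking bookkeeping in the case analysis. With equal densities, either order is allowed, and when \(X\) is first with equal density the argument still yields \(f(e\mid X_e)\ge f(e\mid Y_e)\) and the same contradiction, so ties need no special handling. We must also check that the definition of \(Y'_p\) (the state before \(\delta_X\) was set to \(1\)) excludes exactly the elements inserted after, or at the same step as, \(X\)'s saturation. If \(X\) is saturated while processing \(e\) and \(e\) then goes to \(Y\), then \(e\notin Y'_p\), as the definition requires. Item 2 follows by the same argument with \(X'_q\), \(\delta_Y\) and \(X_e\).
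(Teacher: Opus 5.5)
Your proof is correct and follows the paper's argument. For each \(e\in O'\cap Y'_p\) you use that \(X\) was still in \(\mathcal{C}\) when \(e\) entered \(Y\) to get \(f(e\mid X_e)\le f(e\mid Y_e)\), then apply submodularity (\(X_e\subseteq X_p\), \(e\notin X_p\)) and sum; your case analysis on the visiting order spells out the step the paper asserts in one line (``since \(e\) was added to \(Y\) instead of \(X\)''), including the check that saturating \(X\) while processing \(e\) would exclude \(e\) from \(Y'_p\).
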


\begin{proof}
    We prove only the first part of the lemma; the second follows by symmetry.
	By the definition of \(Y'_p\), for any \(e \in O'\cap Y'_p\), \(\delta_X=0\) when \(e\) was added to \(Y\).
    This means that \(\mathcal{C}=\{X,Y\}\) at that time.
    Since \(e\) was added to \(Y\) instead of \(X\), we have \(f(e\mid X_e)\leq f(e\mid Y_e)\).
    The lemma now follows by summing over \(O'\cap Y'_p\) and the submodularity of \(f\).
\end{proof}

\begin{lemma}
	\label{lem:not_added}
	The following holds.
	\begin{enumerate}
		\item If \(\delta_X=0\), then \(\sum_{e \in O'\setminus (X_{p} \cup Y'_p)} f(e\mid X_p)\leq \varepsilon \cdot f(O)\).
		\item If \(\delta_Y=0\), then \(\sum_{e \in O'\setminus (Y_q\cup X'_q)} f(e\mid Y_q)\leq \varepsilon \cdot f(O)\).
	\end{enumerate}
\end{lemma}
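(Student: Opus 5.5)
We prove only the first part; the second follows by symmetry. Assume \(\delta_X=0\), so \(X\) is never removed from \(\mathcal{C}\) and remains a candidate in every iteration of the threshold loop. Fix \(e\in O'\setminus(X_p\cup Y'_p)\). Since \(\delta_X=0\), we have \(Y'_p=Y_q\) by definition, so \(e\notin X_p\cup Y_q\); that is, \(e\) is never selected. Moreover \(e\in V'\), because \(O\subseteq V'\) for the guess \(i\) satisfying Eq.~\eqref{eq:guess-i}. Hence \(e\) is scanned in every round of the while loop, including the final round, which runs with some threshold \(\tau_{\mathrm{last}}\) satisfying \(\tau_{\mathrm{last}}>\varepsilon(1-\varepsilon)\Gamma/B\). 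The loop stops because the next value is at most the bound, so \((1-\varepsilon)\tau_{\mathrm{last}}\le \varepsilon(1-\varepsilon)\Gamma/B\), i.e. \(\tau_{\mathrm{last}}\le \varepsilon\Gamma/B\).

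\emph{Key step.} When \(e\) was scanned in the last round, it was not added to \(X\), even though \(X\in\mathcal{C}\). Adding \(e\) to \(X\) could have failed for only two reasons.
\begin{itemize}
\item The density test failed: \(f(e\mid X)/c(e)<\tau_{\mathrm{last}}\).
\item The budget test failed: \(c(X+e)>\varepsilon(1+\varepsilon)^iB\). But that would set \(\delta_X=1\), contradicting \(\delta_X=0\).
\end{itemize}
One more case must be excluded: \(e\) might be skipped for \(X\) because the loop breaks after \(e\) is added to \(Y\). This cannot occur, since \(e\) is never added anywhere.

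So at that moment \(f(e\mid X)<\tau_{\mathrm{last}}\,c(e)\). The set \(X\) at that moment is a subset of the final \(X_p\), so submodularity gives
\[
f(e\mid X_p)\le f(e\mid X)<\tau_{\mathrm{last}}\,c(e)\le \varepsilon\,\Gamma\, c(e)/B .
\]

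\emph{Summing.} Summing over \(e\in O'\setminus(X_p\cup Y'_p)\) gives
\[
\sum_{e} f(e\mid X_p)\le \varepsilon\,\Gamma\, c(O')/B\le \varepsilon\,\Gamma\le \varepsilon f(O),
\]
using \(c(O')\le c(O)\le B\) and \(\Gamma\le f(O)\) (Lemma~\ref{lem:LA-estimate}).

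\emph{Main obstacle.} The delicate point is the threshold bookkeeping. One must check that the last executed threshold really is at most \(\varepsilon\Gamma/B\), which follows from the loop condition, and that the final round actually executes. The latter holds because the initial value \(19\Gamma/(4\varepsilon B)\) exceeds \(\varepsilon(1-\varepsilon)\Gamma/B\). One must also confirm that, when \(\delta_X=0\), the element \(e\) really faced \(X\) in that final round; this uses \(Y'_p=Y_q\) to place \(e\) outside \(Y_q\).
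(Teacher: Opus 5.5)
Your proof is correct and follows essentially the same route as the paper: you bound the last threshold by $\varepsilon\Gamma/B$ via the loop condition, and you argue that an unselected $e\in O'\setminus(X_p\cup Y_q)$ examined against $X$ in the final scan must have failed the density test, since adding it or saturating $X$ are both excluded. You then apply submodularity and sum using $c(O')\le B$ and $\Gamma\le f(O)$. Your extra bookkeeping (that $e\in V'$, that the loop runs at least once, and that the retry rule after a failed insertion guarantees $e$ actually faces $X$) makes explicit what the paper leaves implicit.
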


\begin{proof}
	We prove only the first part of the lemma; the second follows by symmetry.
    By the termination condition of the \emph{while} loop and the update rule of $\tau$, the last threshold \(\tau_{\mathrm{last}}\) satisfies \(\varepsilon(1-\varepsilon)\cdot \Gamma/B<\tau_{\mathrm{last}}\leq \varepsilon\cdot \Gamma/B\).

    Since \(\delta_X=0\), we have \(Y'_p=Y_q\), and
    \(\delta_X\) remained \(0\) throughout the last
    threshold scan.
    Every \(e\in O'\setminus(X_p\cup Y'_p)\) remained
    unselected during that scan.
    If its density into \(X\) met \(\tau_{\mathrm{last}}\)
    when it was examined, the algorithm would either add
    it to \(X\) or \(Y\), or set \(\delta_X=1\), because
    the same element is reconsidered for the remaining
    set after a failed insertion.
    Each possibility is impossible.
    By submodularity,
    \(f(e\mid X_p)/c(e)\leq\tau_{\mathrm{last}}
    \leq\varepsilon\Gamma/B\).
	By submodularity and the facts that \(c(O'\setminus (X_{p} \cup Y'_p))\leq B\) and \(\Gamma\leq f(O)\),
	\[ 
    \sum_{e \in O'\setminus (X_{p} \cup Y'_p)} f(e\mid X_p)\le c(O'\setminus (X_{p} \cup Y'_p))\cdot \varepsilon\cdot \Gamma/B\leq \varepsilon \cdot f(O).\qedhere
    \]
\end{proof}

\begin{lemma}
	\label{lem:added_is_larger}
	The following holds.
	\begin{enumerate}
		\item If \(\delta_X=1\), and it is set to \(1\) after the first iteration of the \emph{while} loop, then
        \[ (1-\varepsilon) \sum_{e \in O' \setminus (X_{p} \cup Y'_p)} f(e\mid X_{p})\leq (1+\varepsilon)\sum_{e \in (X_p\setminus O') \cup\{x_{p+1}\}} f(e\mid X_e) + \varepsilon \sum_{e \in X_{p+1} \cap O'} f(e\mid X_e). \]
		\item If \(\delta_Y=1\), and it is set to \(1\) after the first iteration of the \emph{while} loop, then
        \[ (1-\varepsilon) \sum_{e \in O' \setminus (Y_{q} \cup X'_q)} f(e\mid Y_{q})\leq (1+\varepsilon)\sum_{e \in (Y_q\setminus O') \cup\{y_{q+1}\}} f(e\mid Y_e) + \varepsilon \sum_{e \in Y_{q+1} \cap O'} f(e\mid Y_e). \]
	\end{enumerate}
\end{lemma}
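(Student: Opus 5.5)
The plan is to compare every element of \(O'\setminus(X_p\cup Y'_p)\) and every element of \(X_{p+1}\) with the single threshold \(\tau^*\) at which \(X\) became saturated. The link between the two sides is the budget relation supplied by the guess~\eqref{eq:guess-i}. I prove only the first part; the second is symmetric, with \(Y\), \(X'_q\) and \(y_{q+1}\) in place of \(X\), \(Y'_p\) and \(x_{p+1}\).

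\emph{Step 1 (upper bound on the left-hand side).} Let \(\tau^*\) be the threshold in force when \(\delta_X\) was set to \(1\). Since this did not happen in the first iteration of the \emph{while} loop, the previous threshold \(\tau^*/(1-\varepsilon)\) was fully scanned, and \(\delta_X=0\) throughout that scan.

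Take \(e\in O'\setminus(X_p\cup Y'_p)\). Then \(e\in V'\). Moreover \(e\) was unselected during that previous scan: it is not in \(X_p\), and it is not in \(Y'_p\), which contains everything added to \(Y\) before saturation, in particular everything added during that earlier scan. So when \(e\) was examined there, it had \(f(e\mid X)/c(e)<\tau^*/(1-\varepsilon)\). Otherwise the algorithm would have added \(e\) to \(X\) or to \(Y\), or set \(\delta_X=1\) during that scan, and none of these happened.

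By submodularity, since \(X_p\) is a superset of that earlier state of \(X\), we get \((1-\varepsilon)f(e\mid X_p)\le \tau^* c(e)\). Summing gives
\[
(1-\varepsilon)\sum_{e\in O'\setminus(X_p\cup Y'_p)} f(e\mid X_p)\le \tau^*\, c\bigl(O'\setminus(X_p\cup Y'_p)\bigr)\le \tau^*\bigl(c(O')-c(O'\cap X_p)\bigr).
\]

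\emph{Step 2 (budget relation).} By~\eqref{eq:guess-i} we have \(c(O')=B-c(r)<\varepsilon(1+\varepsilon)^{i+1}B\). The failed insertion gives \(c(X_{p+1})>\varepsilon(1+\varepsilon)^iB\). Together these yield \(c(O')<(1+\varepsilon)c(X_{p+1})\). Hence
\[
c(O')-c(O'\cap X_p)<(1+\varepsilon)\,c\bigl((X_p\setminus O')\cup\{x_{p+1}\}\bigr)+\varepsilon\, c(X_p\cap O').
\]
Here I split \(X_{p+1}\) into \((X_p\setminus O')\cup\{x_{p+1}\}\) and \(X_p\cap O'\), and use \((1+\varepsilon)-1=\varepsilon\) on the second part.

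\emph{Step 3 (convert costs to gains).} Every \(e\in X_p\) was added at a threshold of at least \(\tau^*\), because thresholds only decrease and \(X\) receives nothing after saturation. So \(\tau^* c(e)\le f(e\mid X_e)\). The same holds for \(x_{p+1}\), which was rejected at threshold \(\tau^*\) with density at least \(\tau^*\) into \(X_p=X_{x_{p+1}}\).

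Multiplying the inequality of Step 2 by \(\tau^*\) and applying this bound elementwise gives the right-hand side of the lemma. The only point needing care is that \(x_{p+1}\) may lie in \(O'\). In that case its term appears with coefficient \(1+\varepsilon\) in the first sum. Adding the nonnegative term \(\varepsilon f(x_{p+1}\mid X_{x_{p+1}})\) to the second sum, so that it ranges over \(X_{p+1}\cap O'\), only enlarges the right-hand side, so the stated bound still holds.

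\emph{Main obstacle.} The delicate step is Step 1, namely justifying that each \(e\in O'\setminus(X_p\cup Y'_p)\) was really tested against \(X\) at threshold \(\tau^*/(1-\varepsilon)\) and rejected. This requires checking that \(e\) had not yet been placed in \(Y\) at that time, which follows from \(e\notin Y'_p\). It also requires that a failed insertion into \(Y\) forces a reconsideration of \(e\) for \(X\), so that \(e\) could not have escaped the test. This is exactly where the hypothesis that saturation happens after the first iteration is used.
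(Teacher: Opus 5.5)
Your proof is correct and follows the paper's argument: the previous-scan bound at threshold \(\tau^*/(1-\varepsilon)\), the budget relation \((1+\varepsilon)c(X_{p+1})>c(O')\) from the guess, and the threshold bound \(f(e\mid X_e)\ge \tau^* c(e)\) on \(X_{p+1}\). The only difference is bookkeeping: you apply the previous-scan bound to \(x_{p+1}\) itself when \(x_{p+1}\in O'\), which is valid because \(x_{p+1}\) was unselected and tested against \(X\) in that scan, and this lets you skip the paper's final case split on whether \(x_{p+1}\in O'\). One small slip: in Step 2 the correct relation is \(c(O')=c(O)-c(r)\le B-c(r)\), not an equality, but that inequality is all your argument uses.
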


\begin{proof}
	We prove only the first part of the lemma; the second follows by symmetry.
    Since \(\delta_X=1\), \(X_{p+1}\) is well-defined.
    We first consider \(O' \setminus (X_{p+1} \cup Y'_p)\).
    
    Let \(\tau_{x_{p+1}}\) denote the threshold at which the algorithm sets \(\delta_X=1\).
    Throughout the preceding threshold scan at
    \(\tau_{x_{p+1}}/(1-\varepsilon)\), we had
    \(\delta_X=0\), and every element of
    \(O'\setminus(X_{p+1}\cup Y'_p)\) remained unselected.
    If such an element met the density threshold for \(X\),
    the algorithm would either add it to \(X\) or \(Y\),
    or set \(\delta_X=1\), because the same element is
    reconsidered for the remaining set after a failed
    insertion.
    Each possibility contradicts the preceding observations.
    By submodularity, it follows that
	\[ \frac{f(e\mid X_p)}{c(e)}\leq  \frac{\tau_{x_{p+1}}}{1-\varepsilon}. \]
	Summing over \(O' \setminus (X_{p+1} \cup Y'_p)\) yields
	\begin{equation}
		(1-\varepsilon)\sum_{e \in O' \setminus (X_{p+1} \cup Y'_p)} f(e\mid  X_{p})\leq c(O' \setminus (X_{p+1} \cup Y'_p))\cdot \tau_{x_{p+1}}. \label{eq:not_added}
	\end{equation}
    
    On the other hand, we have the following claim.
    \begin{claim}
        \label{claim:threshold-greedy-gain}
        For any \(T\subseteq X_{p+1}\),
        \[ \sum_{e\in T}f(e\mid X_e)\geq c(T)\cdot \tau_{x_{p+1}}. \]
    \end{claim}
    \begin{proof}
        Every element inserted into \(X\) passed the density test
        at a threshold at least \(\tau_{x_{p+1}}\), and
        \(x_{p+1}\) passed the density test when the algorithm
        set \(\delta_X=1\). Therefore,
        \[
        \frac{f(e\mid X_e)}{c(e)}\geq \tau_{x_{p+1}}\qquad\text{for every }e\in X_{p+1}.
        \]
        Summing over \(T\) yields the claim.
    \end{proof}
    
    By Claim~\ref{claim:threshold-greedy-gain}, we have
    \begin{align*}
    &(1+\varepsilon)\sum_{e\in X_{p+1}\setminus O'}f(e\mid X_e)
    +\varepsilon\sum_{e\in X_{p+1}\cap O'}f(e\mid X_e)\\
    &\quad\geq
    \bigl((1+\varepsilon)c(X_{p+1}\setminus O')
    +\varepsilon c(X_{p+1}\cap O')\bigr)\tau_{x_{p+1}}.
    \end{align*}
    
    Since \(c(X_{p+1})> \varepsilon(1+\varepsilon)^i B\), we have
    \[ (1+\varepsilon)\cdot c(X_{p+1}) > \varepsilon(1+\varepsilon)^{i+1}B>B-c(r)\geq c(O'). \]
    Subtracting \(c(X_{p+1}\cap O')\) from both sides, we have
    \[ (1+\varepsilon)\cdot c(X_{p+1}\setminus O')+\varepsilon\cdot c(X_{p+1}\cap O')\geq c(O' \setminus X_{p+1}) \ge c(O' \setminus (X_{p+1} \cup Y'_p)). \]
    Combining these inequalities with Eq.~\eqref{eq:not_added}, we obtain
    \[ (1+\varepsilon)\sum_{e \in X_{p+1} \setminus O'} f(e\mid X_e)+\varepsilon \sum_{e \in X_{p+1} \cap O'} f(e\mid X_e) \geq  (1-\varepsilon)\sum_{e \in O' \setminus (X_{p+1} \cup Y'_p)} f(e\mid  X_{p}). \]
    
    If \(x_{p+1}\notin O'\), we have \(X_{p+1} \setminus O'=(X_{p} \setminus O')\cup\{x_{p+1}\}\) and \(O' \setminus (X_{p+1} \cup Y'_p)=O' \setminus (X_{p} \cup Y'_p)\).
    The lemma follows directly from the above inequality.
    
    If \(x_{p+1}\in O'\), we have \(X_{p+1}\setminus O'= X_p\setminus O'\).
    Then
    \begin{align*}
            & (1+\varepsilon)\sum_{e \in (X_p\setminus O') \cup\{x_{p+1}\}} f(e\mid X_e) + \varepsilon \sum_{e \in X_{p+1} \cap O'} f(e\mid X_e) \\
            =\ & (1+\varepsilon)\cdot f(x_{p+1}\mid X_p)+(1+\varepsilon)\sum_{e \in X_{p+1} \setminus O'} f(e\mid X_e) + \varepsilon \sum_{e \in X_{p+1} \cap O'} f(e\mid X_e) \\
            \geq\ & (1+\varepsilon)\cdot f(x_{p+1}\mid X_p)+ (1-\varepsilon) \sum_{e \in O' \setminus (X_{p+1} \cup Y'_p)} f(e\mid X_{p}) \\
            \geq\ & (1-\varepsilon) \sum_{e \in O' \setminus (X_{p} \cup Y'_p)} f(e\mid X_{p}).\qedhere
    \end{align*}
\end{proof}

We now bound \(f(O \cup S) - f(S + r)\) for \(S \in \{X_p, Y_q\}\).
When \(S\) is unsaturated (\(\delta_S = 0\)), Lemma~\ref{lem:marginal-gain-delta-zero0} applies;
when \(S\) is saturated (\(\delta_S = 1\)), Lemma~\ref{lem:marginal-gain-delta-zero1} applies.

\begin{lemma}
	\label{lem:marginal-gain-delta-zero0}
	The following holds.
	\begin{enumerate}
		\item  If \(\delta_X=0\), then 
        \[ f(O \cup X_p) - f(X_p+r)\leq \sum_{e \in O' \cap Y_q} f(e\mid Y_e) + \varepsilon\cdot f(O). \]
		\item If \(\delta_Y=0\), then 
        \[ f(O \cup Y_q) - f(Y_q+r)\leq \sum_{e \in O' \cap X_p} f(e\mid X_e) + \varepsilon\cdot f(O). \]
	\end{enumerate}
\end{lemma}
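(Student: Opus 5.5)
We prove only the first part; the second follows by symmetry, exchanging the roles of \(X\) and \(Y\).

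The plan is to chain Observation~\ref{obs:S+r} with submodularity and then split the sum as suggested before Lemma~\ref{lem:aux1}. By Observation~\ref{obs:S+r} with \(S=X_p\), we have \(f(O\cup X_p)-f(X_p+r)\le f(O'\cup X_p)-f(X_p)\). Submodularity bounds the right-hand side by \(\sum_{e\in O'\setminus X_p} f(e\mid X_p)\).

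Since \(X_p\) and \(Y'_p\) are disjoint, we partition \(O'\setminus X_p\) into \(O'\cap Y'_p\) and \(O'\setminus(X_p\cup Y'_p)\), and bound the two pieces separately.
\begin{itemize}
\item \emph{First piece.} Part~1 of Lemma~\ref{lem:aux1} gives \(\sum_{e\in O'\cap Y'_p} f(e\mid X_p)\le \sum_{e\in O'\cap Y'_p} f(e\mid Y_e)\).
\item \emph{Second piece.} Because \(\delta_X=0\), part~1 of Lemma~\ref{lem:not_added} gives \(\sum_{e\in O'\setminus(X_p\cup Y'_p)} f(e\mid X_p)\le \varepsilon f(O)\).
\end{itemize}

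It remains to identify \(O'\cap Y'_p\) with \(O'\cap Y_q\). This is where the hypothesis \(\delta_X=0\) enters: by the definition of \(Y'_p\), if \(\delta_X\) is never set to \(1\), then \(Y'_p=Y_q\). Substituting this into the first piece and adding the two bounds yields the claimed inequality
\[
f(O\cup X_p)-f(X_p+r)\le \sum_{e\in O'\cap Y_q} f(e\mid Y_e)+\varepsilon f(O).
\]

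There is no substantial obstacle. The only point needing care is the identification \(Y'_p=Y_q\), which rests entirely on the convention in the definition of \(Y'_p\) for the case \(\delta_X=0\).
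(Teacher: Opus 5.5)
Your proposal is correct and follows the paper's proof essentially step for step: Observation~\ref{obs:S+r} plus submodularity to reduce to $\sum_{e\in O'\setminus X_p} f(e\mid X_p)$, the split into $O'\cap Y'_p$ and $O'\setminus(X_p\cup Y'_p)$, Lemmas~\ref{lem:aux1} and~\ref{lem:not_added} for the two pieces, and the identification $Y'_p=Y_q$ from the definition when $\delta_X=0$.
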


\begin{proof}
	We prove only the first part of the lemma; the second follows by symmetry.
	By Observation \ref{obs:S+r} and the submodularity of \(f\), we have
    \begin{align*}
		f(O\cup X_p)-f(X_p+r) &\leq f\left( O^{\prime} \cup X_p\right)-f\left( X_p\right) \\
		&\leq \sum_{e \in O^{\prime} \setminus X_p} f\left( e \mid X_p\right) \\
		&= \sum_{e \in O^{\prime} \cap Y'_p} f( e \mid X_p) + \sum_{e \in O^{\prime} \setminus\left(X_p \cup Y'_p\right)} f( e \mid X_p).
	\end{align*}
	Since \(\delta_X=0\), \(Y'_p=Y_q\) by definition.
    By Lemma~\ref{lem:aux1}, we have
	\[ \sum_{e \in O'\cap Y'_p} f(e\mid X_p)\leq \sum_{e \in O'\cap Y'_p} f(e\mid Y_e)=\sum_{e \in O' \cap Y_q} f(e\mid Y_e). \]
	By Lemma~\ref{lem:not_added}, we have
    \[ \sum_{e \in O'\setminus (X_{p} \cup Y'_p)} f(e\mid X_p)\leq \varepsilon \cdot f(O). \]
	Together, these inequalities imply the lemma.
\end{proof}

\begin{lemma}
	\label{lem:marginal-gain-delta-zero1}
	The following holds.
	\begin{enumerate}
		\item If \(\delta_X=1\), and it is set to \(1\) after the first iteration of the outer threshold loop, then
        \begin{align*}
            & f(O\cup X_p)-f(X_p+r) \\
            \leq\ &\sum_{e \in O^{\prime} \cap Y_q} f( e \mid Y_e)+\frac{1+\varepsilon}{1-\varepsilon}\sum_{e \in (X_p\setminus O') \cup\{x_{p+1}\}} f(e\mid X_e)+\frac{\varepsilon}{1-\varepsilon} \sum_{e \in X_{p+1} \cap O'} f(e\mid X_e).
        \end{align*}
		\item If \(\delta_Y=1\), and it is set to \(1\) after the first iteration of the outer threshold loop, then
        \begin{align*}
            & f(O\cup Y_q)-f(Y_q+r) \\ 
            \leq\ & \sum_{e \in O^{\prime} \cap X_p} f( e \mid X_e) +\frac{1+\varepsilon}{1-\varepsilon}\sum_{e \in (Y_q\setminus O') \cup\{y_{q+1}\}} f(e\mid Y_e)+\frac{\varepsilon}{1-\varepsilon} \sum_{e \in Y_{q+1} \cap O'} f(e\mid Y_e).
        \end{align*}
	\end{enumerate}
\end{lemma}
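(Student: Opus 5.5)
We prove only the first part; the second follows by the symmetric argument with the roles of \(X\) and \(Y\) exchanged. The plan is to follow the proof of Lemma~\ref{lem:marginal-gain-delta-zero0}, replacing Lemma~\ref{lem:not_added} by Lemma~\ref{lem:added_is_larger}.

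First, Observation~\ref{obs:S+r} with \(S=X_p\), together with submodularity, gives
\[
f(O\cup X_p)-f(X_p+r)\le f(O'\cup X_p)-f(X_p)\le \sum_{e\in O'\setminus X_p} f(e\mid X_p).
\]
We split the last sum over \(O'\cap Y'_p\) and over \(O'\setminus(X_p\cup Y'_p)\). This is a partition because \(Y'_p\) is disjoint from \(X_p\).

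For the first part of the split, Lemma~\ref{lem:aux1} bounds the sum by \(\sum_{e\in O'\cap Y'_p} f(e\mid Y_e)\). We then enlarge the index set from \(O'\cap Y'_p\) to \(O'\cap Y_q\). This is valid because \(Y'_p\subseteq Y_q\): \(Y'_p\) is a prefix state of \(Y\). It also requires every term \(f(e\mid Y_e)\) with \(e\in Y_q\) to be nonnegative. That holds because each element of \(Y_q\) was inserted after passing the density test \(f(e\mid Y_e)/c(e)\ge\tau>0\).

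For the second part, we apply the first item of Lemma~\ref{lem:added_is_larger}, whose hypotheses match ours exactly. Dividing its inequality by \(1-\varepsilon>0\) gives
\[
\sum_{e\in O'\setminus(X_p\cup Y'_p)} f(e\mid X_p)\le \frac{1+\varepsilon}{1-\varepsilon}\sum_{e\in(X_p\setminus O')\cup\{x_{p+1}\}} f(e\mid X_e)+\frac{\varepsilon}{1-\varepsilon}\sum_{e\in X_{p+1}\cap O'} f(e\mid X_e).
\]
Adding the two bounds yields the claimed inequality.

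The only delicate point is the nonnegativity used when enlarging \(O'\cap Y'_p\) to \(O'\cap Y_q\). Every element ever inserted into \(Y\) had positive marginal density at insertion time, so the extra terms are nonnegative. All other steps are direct invocations of earlier results.
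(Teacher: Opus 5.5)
Your proof is correct and follows the paper's argument step for step: Observation~\ref{obs:S+r} plus submodularity, the split over \(O'\cap Y'_p\) and \(O'\setminus(X_p\cup Y'_p)\), Lemma~\ref{lem:aux1} with the index set enlarged to \(O'\cap Y_q\) via \(Y'_p\subseteq Y_q\), and Lemma~\ref{lem:added_is_larger} divided by \(1-\varepsilon\). You also state explicitly that the extra terms \(f(e\mid Y_e)\) are nonnegative because each insertion into \(Y\) passed a density test with \(\tau>0\), a step the paper uses without comment.
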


\begin{proof}
	We prove only the first part of the lemma; the second follows by symmetry.
	By Observation \ref{obs:S+r} and the submodularity of \(f\), we have
	\begin{align*}
		f(O\cup X_p)-f(X_p+r) &\leq f\left( O^{\prime} \cup X_p\right)-f\left( X_p\right) \\
		&\leq \sum_{e \in O^{\prime} \setminus X_p} f\left( e \mid X_p\right) \\
		&= \sum_{e \in O^{\prime} \cap Y'_p} f( e \mid X_p) + \sum_{e \in O^{\prime} \setminus\left(X_p \cup Y'_p\right)} f( e \mid X_p).
	\end{align*}
	By definition, \(Y'_p\subseteq Y_q\).
    Then, by Lemma~\ref{lem:aux1}, we have
	\begin{equation*}
		\sum_{e \in O^{\prime} \cap Y'_p} f\left( e \mid X_p\right)\leq \sum_{e \in O^{\prime} \cap Y'_p} f( e \mid Y_e)\leq \sum_{e \in O^{\prime} \cap Y_q} f( e \mid Y_e).
	\end{equation*}
	By Lemma~\ref{lem:added_is_larger}, we have
    \[ \sum_{e \in O' \setminus (X_{p} \cup Y'_p)} f(e\mid X_{p})\leq \frac{1+\varepsilon}{1-\varepsilon}\sum_{e \in (X_p\setminus O') \cup\{x_{p+1}\}} f(e\mid X_e)+\frac{\varepsilon}{1-\varepsilon} \sum_{e \in X_{p+1} \cap O'} f(e\mid X_e). \]
    Together, these inequalities imply the lemma.
\end{proof}

Finally, we establish the approximation ratio and query complexity of Algorithm~\ref{alg:small-r-fixed}.

\begin{theorem}
    Assume that \(c(r)\leq (1-\varepsilon)B\), Algorithm~\ref{alg:small-r-fixed} achieves a \((1/4-O(\varepsilon))\)-approximation and uses \(O(n\log^2(1/\varepsilon)/\varepsilon^2)\) queries.
\end{theorem}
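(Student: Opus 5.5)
The plan is to fix the guess $i$ satisfying Eq.~\eqref{eq:guess-i} and show that $f(S^*)\ge\max\{f(X^+),f(Y^+)\}$ is at least $(1/4-O(\varepsilon))f(O)$. I would start from Observation~\ref{obs:upper-bound-of-f(O)}, add and subtract $f(X_p+r)$ and $f(Y_q+r)$, and split into cases on $(\delta_X,\delta_Y)$, using Lemmas~\ref{lem:marginal-gain-delta-zero0} and~\ref{lem:marginal-gain-delta-zero1}.

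\emph{Preliminary facts.}
\begin{enumerate}
\item Every insertion passes a positive density threshold, so $f(e\mid X_e)\ge 0$ for all $e\in X_{p+1}$, and similarly for $Y_{q+1}$.
\item Telescoping gives $\sum_{e\in T}f(e\mid X_e)\le f(X_p)-f(\varnothing)\le f(X_p)$ for any $T\subseteq X_p$. Likewise, sums over subsets of $X_{p+1}$ are at most $f(X_{p+1})$.
\item The sets $X_p+r$ and $X_{p+1}$ (when $\delta_X=1$) are feasible and are dominated by $X^+$. Indeed, $c(X_p)\le\varepsilon(1+\varepsilon)^iB\le B-c(r)$, and every element of $V'$ costs at most $(1-\varepsilon(1+\varepsilon)^i)B$. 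Moreover $r,x_{p+1}\in V'$ and the augmentation picks the best nonnegative marginal, so $f(X^+)\ge\max\{f(X_p),f(X_p+r),f(X_{p+1})\}$. The same holds for $Y$.
\end{enumerate}

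\emph{Early saturation.} If some flag, say $\delta_X$, is set during the first iteration of the while loop, Lemma~\ref{lem:marginal-gain-delta-zero1} does not apply. In that case every element of $X_{p+1}$ has density at least $19\Gamma/(4\varepsilon B)$, and $c(X_{p+1})>\varepsilon(1+\varepsilon)^iB\ge\varepsilon B$. Then Claim~\ref{claim:threshold-greedy-gain}'s argument gives $f(X_{p+1})\ge\sum_{e\in X_{p+1}}f(e\mid X_e)\ge 19\Gamma/4\ge f(O)/4$. So $f(S^*)\ge f(O)/4$ directly.

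\emph{Main cases.}
\begin{itemize}
\item \textbf{Both flags $0$.} Observation~\ref{obs:upper-bound-of-f(O)} and Lemma~\ref{lem:marginal-gain-delta-zero0} give
$f(O)\le f(X_p+r)+f(Y_q+r)+\sum_{O'\cap Y_q}f(e\mid Y_e)+\sum_{O'\cap X_p}f(e\mid X_e)+2\varepsilon f(O)$.
The two sums are at most $f(Y_q)$ and $f(X_p)$, so $f(O)\le 4f(S^*)+2\varepsilon f(O)$.
\item \textbf{$\delta_X=1$, $\delta_Y=0$.} The bound for $X$ (Lemma~\ref{lem:marginal-gain-delta-zero1}) contributes $\sum_{O'\cap Y_q}f(e\mid Y_e)\le f(Y_q)$. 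The bound for $Y$ (Lemma~\ref{lem:marginal-gain-delta-zero0}) contributes $\sum_{O'\cap X_p}f(e\mid X_e)$. Combine the latter with the saturated terms $\frac{1+\varepsilon}{1-\varepsilon}\sum_{(X_p\setminus O')\cup\{x_{p+1}\}}+\frac{\varepsilon}{1-\varepsilon}\sum_{X_{p+1}\cap O'}$. Each element of $X_{p+1}$ then appears with coefficient at most $\frac{1+\varepsilon}{1-\varepsilon}$: elements of $X_p\cap O'$ get $1/(1-\varepsilon)$, and $x_{p+1}$ gets at most $\frac{1+2\varepsilon}{1-\varepsilon}$. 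By nonnegativity of the marginals, the total is at most $(1+O(\varepsilon))f(X_{p+1})$. Hence $f(O)\le f(X_p+r)+f(Y_q+r)+f(Y_q)+(1+O(\varepsilon))f(X_{p+1})+\varepsilon f(O)\le(4+O(\varepsilon))f(S^*)+\varepsilon f(O)$.
\item \textbf{$\delta_X=0$, $\delta_Y=1$.} This is symmetric to the previous case.
\item \textbf{Both flags $1$.} By the same bookkeeping, $f(O)\le f(X_p+r)+f(Y_q+r)+(1+O(\varepsilon))(f(X_{p+1})+f(Y_{q+1}))$.
\end{itemize}
Rearranging in each case gives $f(S^*)\ge(1/4-O(\varepsilon))f(O)$.

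\emph{Main obstacle.} The delicate step is the coefficient bookkeeping above. The $O'$-parts of one set, which appear in the other set's bound via Lemma~\ref{lem:aux1}, must merge with the saturated-set terms into a single $(1+O(\varepsilon))f(X_{p+1})$ rather than $2f(X_{p+1})$. This works because the two bounds charge disjoint subsets of $X_{p+1}$ (the parts in and out of $O'$), up to the $O(\varepsilon)$ terms, and because all marginals are nonnegative.

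\emph{Query complexity.} LA costs $O(n)$ queries (Lemma~\ref{lem:LA-estimate}). There are $O(\log(1/\varepsilon)/\varepsilon)$ guesses of $i$. For each guess, $\tau$ decreases geometrically from $19\Gamma/(4\varepsilon B)$ to $\varepsilon(1-\varepsilon)\Gamma/B$, a ratio of $O(1/\varepsilon^2)$, so there are $O(\log(1/\varepsilon)/\varepsilon)$ threshold rounds. Each round makes $O(n)$ queries, since there are at most two candidate sets per element, and the augmentation adds $O(n)$ queries per guess. The total is $O(n\log^2(1/\varepsilon)/\varepsilon^2)$.
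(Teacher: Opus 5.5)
Your proposal is correct and follows the paper's proof essentially step for step. It has the same structure: the early-saturation bound from the initial threshold $19\Gamma/(4\varepsilon B)$, then the three cases on $(\delta_X,\delta_Y)$ via Observation~\ref{obs:upper-bound-of-f(O)} and Lemmas~\ref{lem:marginal-gain-delta-zero0} and~\ref{lem:marginal-gain-delta-zero1}, with the $X_p\cap O'$ terms absorbed into a single $(1+O(\varepsilon))f(X_{p+1})$, and the same query count.

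There are two small touch-ups. First, the uniform per-element coefficient is $\frac{1+2\varepsilon}{1-\varepsilon}$ (attained by $x_{p+1}$ when $x_{p+1}\in O'$), not the $\frac{1+\varepsilon}{1-\varepsilon}$ you first state. Second, you should say explicitly, as the paper does, that $c(Z)\le\varepsilon(1+\varepsilon)^iB$ and $c(e)\le(1-\varepsilon(1+\varepsilon)^i)B$ for $e\in V'$ make $X^+$ and $Y^+$ feasible for every guess $i$, so the returned $S^*$ is feasible.
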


\begin{proof}
    We first verify feasibility for every guess \(i\).
    The insertion rule maintains \(c(Z)\leq\varepsilon(1+\varepsilon)^iB\) for each \(Z\in\{X,Y\}\).
    Every \(e\in V'\) satisfies \(c(e)\leq(1-\varepsilon(1+\varepsilon)^i)B\), and therefore
    \[
    c(Z+e)\leq c(Z)+c(e)\leq B.
    \]
    Thus both augmented candidates are feasible, including the case where no element is added.
    Since the returned solution is the best of these candidates over all guesses and the initial empty set, it is feasible.
    
    We now analyze the approximation ratio for the correct guess.
    
	First, if \(\delta_X\) is set to \(1\) in the first iteration of the outer threshold loop, where \(\tau=19\Gamma/(4\varepsilon B)\),
    then
    \[ f(S^*)\geq f(X_{p+1})\geq c(X_{p+1})\cdot \tau\geq \varepsilon B\cdot 19\Gamma/(4\varepsilon B)\geq f(O)/4. \]
    The approximation guarantee holds.
    The same argument holds for \(Y_{q+1}\).
    Thus, in the following, if \(\delta_X\) or \(\delta_Y\) equals \(1\), we assume that they are set to \(1\) after the first iteration of the outer threshold loop.

    By Observation \ref{obs:upper-bound-of-f(O)},
    \begin{equation}
        f(O)-f(X_{p}+r)- f(Y_q+r) \leq  f(O\cup X_p)-f(X_p+r)+ f(O\cup Y_q)-f(Y_q+r). \label{eq:aux1}
    \end{equation}
    We prove the theorem with a case analysis.
    
	\textbf{Case 1.} \(\delta_X=\delta_Y=0\). By Eq.~\eqref{eq:aux1} and Lemma~\ref{lem:marginal-gain-delta-zero0}, we have
	\begin{align*}
		& f(O)-f(X_{p}+r)- f(Y_q+r) \\
        \leq\ &  \sum_{e \in O' \cap Y_q} f(e\mid Y_e) + \varepsilon\cdot f(O)+\sum_{e \in O' \cap X_p} f(e\mid X_e) + \varepsilon\cdot f(O) \\
		\leq\ &  f(X_p) + f(Y_q) + 2\varepsilon\cdot f(O).
	\end{align*}
    
    By rearranging the inequality, we have
    \[ f(S^*)\geq \frac{1}{4}(f(X_p)+f(Y_q)+f(X_p+r)+f(Y_q+r))\geq \frac{1-2\varepsilon}{4}\cdot f(O). \]

	\textbf{Case 2.} Exactly one of \(\delta_X\) and \(\delta_Y\) equals \(1\).
    Assume w.l.o.g.~that \(\delta_X=1\) and \(\delta_Y=0\).
    By Eq.~\eqref{eq:aux1} and Lemmas \ref{lem:marginal-gain-delta-zero0} and \ref{lem:marginal-gain-delta-zero1},
    \begin{align*}
       &\ f(O)-f(X_{p}+r)- f(Y_q+r) \\
       \leq &\ \sum_{e \in O^{\prime} \cap Y_q} f( e \mid Y_e)+\frac{1+\varepsilon}{1-\varepsilon}\sum_{e \in (X_p\setminus O') \cup\{x_{p+1}\}} f(e\mid X_e)+\frac{\varepsilon}{1-\varepsilon} \sum_{e \in X_{p+1} \cap O'} f(e\mid X_e) \\
       + &\ \sum_{e \in O' \cap X_p} f(e\mid X_e) + \varepsilon\cdot f(O) \\
       \leq &\ f(Y_q)+ \frac{1+2\varepsilon}{1-\varepsilon} \cdot f(X_{p+1}) + \varepsilon\cdot f(O).
    \end{align*}
    By rearranging the inequality, we have
    \[ \frac{1+2\varepsilon}{1-\varepsilon} \cdot f(S^*)\geq \frac{1}{4}\left(\frac{1+2\varepsilon}{1-\varepsilon} \cdot f(X_{p+1})+f(Y_q)+f(X_p+r)+f(Y_q+r)\right)\geq \frac{1-\varepsilon}{4}\cdot f(O). \]

    \textbf{Case 3.} \(\delta_X=\delta_Y=1\).
    By Eq.~\eqref{eq:aux1} and Lemma~\ref{lem:marginal-gain-delta-zero1},
    \begin{align*}
       &\ f(O)-f(X_{p}+r)- f(Y_q+r) \\
       \leq &\ \sum_{e \in O^{\prime} \cap Y_q} f( e \mid Y_e)+\frac{1+\varepsilon}{1-\varepsilon}\sum_{e \in (X_p\setminus O') \cup\{x_{p+1}\}} f(e\mid X_e)+\frac{\varepsilon}{1-\varepsilon} \sum_{e \in X_{p+1} \cap O'} f(e\mid X_e) \\
       + &\ \sum_{e \in O^{\prime} \cap X_p} f( e \mid X_e)+\frac{1+\varepsilon}{1-\varepsilon}\sum_{e \in (Y_q\setminus O') \cup\{y_{q+1}\}} f(e\mid Y_e)+\frac{\varepsilon}{1-\varepsilon} \sum_{e \in Y_{q+1} \cap O'} f(e\mid Y_e) \\
       \leq &\ \frac{1+2\varepsilon}{1-\varepsilon} \cdot (f(X_{p+1})+f(Y_{q+1})).
    \end{align*}
    By rearranging the inequality, we have
    \[ \frac{1+2\varepsilon}{1-\varepsilon} \cdot f(S^*)\geq \frac{1}{4}\left(\frac{1+2\varepsilon}{1-\varepsilon} \cdot (f(X_{p+1})+f(Y_{q+1}))+f(X_p+r)+f(Y_q+r)\right)\geq \frac{1}{4}\cdot f(O). \]
    
    To conclude, in all cases, \(f(S^*)\geq (1/4-\varepsilon)\cdot f(O)\).

    Now consider the query complexity of Algorithm~\ref{alg:small-r-fixed}.
    It invokes LA to get the value \(\Gamma\), which uses \(O(n)\) queries by Lemma~\ref{lem:LA-estimate}. 
    It has \(\left\lfloor\log_{1+\varepsilon}\varepsilon^{-1}\right\rfloor=O(\log(1/\varepsilon)/\varepsilon)\) guesses for the value of \(B-c(r)\).
    For each guess, the outer threshold loop has
    \(O(\log(1/\varepsilon)/\varepsilon)\) iterations.
    Each scan uses \(O(n)\) queries, since the inner retry loop
    considers at most two solutions per element.
    After the outer threshold loop, augmentation uses another
    \(O(n)\) queries.
    Thus, the algorithm uses \(O(n\log^2(1/\varepsilon)/\varepsilon^2)\) queries in total.
\end{proof}

\section{The Algorithm for \texorpdfstring{\(c(r)\ge (1-\varepsilon)B\)}{c(r) >= (1-epsilon)B}}
\label{sec:large_c(r)}

In this section, we present a linear-query deterministic algorithm for SMK that achieves a \((1/4-\varepsilon)\)-approximation, assuming that \(c(r)\geq (1-\varepsilon)B\).

Note that when \(f({r})\geq \frac{1}{4}\cdot f(O)\), we can achieve the desired approximation ratio by simply returning the best singleton using exactly \(n\) queries.
Thus, we assume that \(f({r})< \frac{1}{4}\cdot f(O)\).
By the submodularity and non-negativity of \(f\), \(f(O')=f(O-r)\geq f(O)-f(r)+f(\varnothing)>\frac{3}{4}\cdot f(O)\).
If we can find a set \(S\) that satisfies \(c(S)\leq B\) and \(f(S)\geq \frac{1}{3}\cdot f(O')\), then we obtain the desired approximation ratio.
Note that \(c(O')=c(O)-c(r)\leq B-(1-\varepsilon)B=\varepsilon B\) while our budget is \(B\).
This naturally leads to the study of \emph{bicriteria approximation algorithms} for SMK.
\begin{definition}[Bicriteria SMK]
    Given an SMK instance \(\max\{f(S):S\subseteq V,c(S)\leq B\}\), let \(O\) be its optimal solution.
    For any \(\alpha\in(0,1)\) and \(\beta\geq 1\), we say that an algorithm is a \((\alpha,\beta)\)-bicriteria approximation algorithm if it returns a set \(S\) such that \(f(S)\geq \alpha\cdot f(O)\) and \(c(S)\le \beta \cdot B\).
\end{definition}

For this problem, we obtain the following result.
{\ThmBicriteria*}

The proof of the theorem is presented in Section~\ref{subsec:bicriteria_approximation}. For now, we use it to solve SMK under the assumption that \(c(r)\geq (1-\varepsilon)B\).

\begin{theorem}
\label{thm:large-r-application}
Assume \(\varepsilon\in (0,1/12]\) and \(c(r)\ge (1-\varepsilon)B\).
There exists a deterministic algorithm for SMK that achieves a \((1/4-\varepsilon)\)-approximation and uses \(O(n\log(1/\varepsilon)/\varepsilon)\) queries.
\end{theorem}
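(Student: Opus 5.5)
The plan is to run two procedures and return the better of their outputs. The first returns the best singleton $\arg\max_{e\in V} f(\{e\})$, using $n$ queries. The second applies the first item of Theorem~\ref{thm:bicriteria-main} to the modified instance $(V'',f,c,B'')$ with budget $B''=\varepsilon B$. The set $V''=\{e\in V: c(e)\le \varepsilon B\}$ keeps only the elements that fit into this budget.

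If $f(r)\ge f(O)/4$, the singleton already gives the ratio. Otherwise, as argued above, $f(O')>\tfrac34 f(O)$. Since $c(O')\le \varepsilon B$, the set $O'$ lies in $V''$ and is feasible for the modified instance. Let $O''$ be an optimum of the modified instance; then $f(O'')\ge f(O')$. The bicriteria algorithm returns $S$ with $c(S)\le 12\varepsilon B\le B$, which uses $\varepsilon\le 1/12$. Hence $S$ is feasible for the original instance. It also satisfies
\[
f(S)\ge (1/3-\varepsilon')f(O'')\ge (1/3-\varepsilon')\cdot\tfrac34 f(O).
\]
Taking $\varepsilon'=4\varepsilon/3$ gives $f(S)\ge(1/4-\varepsilon)f(O)$, since $(1/3-4\varepsilon/3)\cdot 3/4=1/4-\varepsilon$. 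This rescaling of $\varepsilon$ only changes constants in the query bound.

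One subtlety: the algorithm does not know $c(r)$ or $O$. That is harmless, because $B''=\varepsilon B$ is fixed and known, so no guessing is needed. The case split on $f(r)$ appears only in the analysis; the algorithm simply outputs the better of the two candidates. Both candidates are always feasible: the singleton because $c(e)\le B$ for every $e$, and $S$ because $12\varepsilon\le 1$.

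For the query complexity, the singleton step uses $n$ queries. The bicriteria call costs $O(n\log(1/\varepsilon)/\varepsilon)$ by Theorem~\ref{thm:bicriteria-main}, and this is unaffected by the constant-factor rescaling of $\varepsilon$. The only step needing care is confirming that the bicriteria guarantee refers to the optimum of the restricted instance, and that $O'$ is feasible there so that $f(O'')\ge f(O')$ holds. Once that is in place, the remaining steps are routine arithmetic.
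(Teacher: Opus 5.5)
Your proposal is correct and follows the paper's proof essentially step for step: compare the best singleton with a $(1/3-\varepsilon,12)$-bicriteria solution on the instance with budget $\varepsilon B$, and use $f(O')>\frac34 f(O)$ together with $12\varepsilon B\le B$. Restricting explicitly to $V''$ spells out the w.l.o.g.\ discarding of over-budget elements that the paper leaves implicit (the bound $c(A_i)<B'+B$ needs it); the rescaling $\varepsilon'=4\varepsilon/3$ is unnecessary, since $(1/3-\varepsilon)\cdot\frac34=1/4-3\varepsilon/4\ge 1/4-\varepsilon$ already.
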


\begin{proof}
The pseudo-code of the algorithm is presented below.
\begin{enumerate}
    \item Let \(e_{\max}\in\arg\max_{e\in V} f(e)\).
    \item Let \(S_b\) be a \((1/3-\varepsilon, 12)\)-bicriteria approximate solution on SMK instance 
    \begin{equation}
        \max\{f(S): c(S)\leq \varepsilon B\}. \label{eq:ins}
    \end{equation}
    \item Return \(S^*=\arg\max\{f(e_{\max}), f(S_b)\}\).
\end{enumerate}

We now give an analysis.
If \(f(r)\geq f(O)/4\), then
\[ f(S^*)\geq f(e_{\max})\geq f(r)\geq \frac{1}{4} \cdot f(O). \]
If \(f(r)<f(O)/4\), by the submodularity and non-negativity of \(f\),
\[ f(O')=f(O-r)\geq f(O)-f(r)+f(\varnothing)>\frac{3}{4}\cdot f(O). \]
Moreover, \(c(O')=c(O)-c(r)\leq \varepsilon B\), which means that \(O'\) is feasible for SMK instance \eqref{eq:ins}. Since \(S_b\) is a \((1/3-\varepsilon, 12)\)-bicriteria approximate solution for instance \eqref{eq:ins},
we have \(c(S_b)\leq 12\varepsilon B\leq B\) due to \(\varepsilon\leq 1/12\), and
\[ f(S^*)\geq f(S_b)\geq (1/3-\varepsilon)\cdot f(O')\geq (1/4-\varepsilon)\cdot f(O). \]

Finally, since it requires exactly \(n\) queries to find \(e_{\max}\) and \(O(n\log(1/\varepsilon)/\varepsilon)\) queries to find \(S_b\) by Theorem~\ref{thm:bicriteria-main}, the proposed algorithm uses \(O(n\log(1/\varepsilon)/\varepsilon)\) queries in total.
\end{proof}

\subsection{The Bicriteria Approximation}\label{subsec:bicriteria_approximation}

\begin{algorithm}[t]
    \SetAlgoLined
    \DontPrintSemicolon
    \caption{Repeated Threshold Greedy \(+\) Post-processing}
    \label{alg:bicriteria-general}
    \KwIn{Instance \((V,f,c,B)\), value \(\Gamma\) satisfying \(\Gamma\le f(O)\le 19\Gamma\), parameters \(\varepsilon\), \(\ell\) and \(B'\).}

    \(A\leftarrow\varnothing\).

    \For{\(i=1\) \KwTo \(\ell\)}{
        \(A_i\leftarrow\varnothing\).

        Let \(\tau\gets 19\Gamma/(2B')\).

        \While{\(\tau>\varepsilon(1-\varepsilon)\Gamma/B'\)}{
            \ForEach{\(e\in V\setminus A\)}{
                \If{\(c(A_i)<B'\) and \(\frac{f(e\mid A_i)}{c(e)}\geq \tau\)}{ \label{line:bicriteria-condition}
                    \(A_i\leftarrow A_i+e\).
                }
            }
            Update \(\tau\gets(1-\varepsilon)\tau\).
        }
        \(A\gets A\cup A_i\).
    }

    \For{\(i=1\) \KwTo \(\ell\)}{
        Define \(g_i(D)=f(A_i\cup D)\).

        \(D_i\leftarrow \text{DetUSM}(A, g_i,\varepsilon)\).
    }

    \Return the set in \(\{A_i\}_{i=1}^{\ell}\cup \{D_i\cup A_i\}_{i=1}^{\ell}\) that maximizes \(f\).

\end{algorithm}

In this section, we prove Theorem~\ref{thm:bicriteria-main} via Algorithm~\ref{alg:bicriteria-general}.
It takes two parameters \(\ell\) and \(B'\), which are set as a pair to match each part of the theorem.
The algorithm proceeds in two phases.
First, it repeatedly constructs pairwise-disjoint candidate sets \(A_1,\ldots, A_\ell\) using a threshold-greedy rule.
The threshold $\tau$ is initialized at \(19\Gamma/(2B')\) and decays geometrically by a factor of \((1-\varepsilon)\) until it falls below  \(\varepsilon(1-\varepsilon)\Gamma/B'\).
For each value of \(\tau\), the algorithm scans all the unselected elements.
As long as the cost of \(A_i\) is less than $B'$, elements whose marginal density exceeds \(\tau\) are added to $A_i$.
Second, for each \(A_i\), the algorithm refines it by solving an unconstrained submodular maximization problem over the ground set \(A = \bigcup_{i\in[\ell]}A_i\) with the objective function \(g_i(D) = f(A_i\cup D)\).
This is implemented by invoking the algorithm DetUSM from Lemma~\ref{lem:USM}.
Finally, the algorithm returns the best candidate among the sets \(\{A_i\}_{i\in[\ell]}\) and their refinements \(\{A_i\cup D_i\}_{i\in[\ell]}\).

We now state Theorem~\ref{thm:bicriteria-detail}, a detailed version of Theorem~\ref{thm:bicriteria-main}, whose proof is presented below.

\begin{restatable}{theorem}{ThmBicriteriaDetail}
    \label{thm:bicriteria-detail}
    Algorithm~\ref{alg:bicriteria-general} uses \(O(\ell n\log(1/\varepsilon)/\varepsilon)\) queries. Besides,
    \begin{enumerate}
        \item if \(\ell=6\) and \(B'=B\), Algorithm~\ref{alg:bicriteria-general} is a deterministic \((1/3-\varepsilon, 12)\)-bicriteria approximation algorithm for SMK that uses \(O(n\log(1/\varepsilon)/\varepsilon)\) queries;
        \item if \(\ell=\lceil 1/\varepsilon\rceil\) and \(B'=2B\), Algorithm~\ref{alg:bicriteria-general} is a deterministic \((1/2-\varepsilon, O(1/\varepsilon))\)-bicriteria approximation algorithm for SMK that uses \(O(n\log(1/\varepsilon)/\varepsilon^2)\) queries.
    \end{enumerate}
\end{restatable}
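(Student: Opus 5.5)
The plan is to prove the query bound, the budget bound, and then the approximation ratio by a per-index inequality averaged over the $\ell$ disjoint sets $A_i$.

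\textbf{Queries and budget.} Each of the $\ell$ greedy rounds runs $O(\log(1/\varepsilon)/\varepsilon)$ threshold values, and each value is one pass over $V$ with $O(n)$ queries. The post-processing makes $\ell$ calls to DetUSM on ground set $A$, costing $O(\ell n/\varepsilon)$ queries by Lemma~\ref{lem:USM}. Computing $\Gamma$ with LA adds $O(n)$ by Lemma~\ref{lem:LA-estimate}. This gives $O(\ell n\log(1/\varepsilon)/\varepsilon)$ in total.

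For the budget, an element is added only while $c(A_i)<B'$, and every element has cost at most $B$. Hence $c(A_i)<B'+B$, and every returned set lies in $A$, so its cost is at most $c(A)\le \ell(B'+B)$. With $\ell=6,B'=B$ this is $12B$. With $\ell=\lceil1/\varepsilon\rceil,B'=2B$ it is $O(B/\varepsilon)$.

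\textbf{Per-index bound.} Let $\mathrm{ALG}$ be the returned value, and set $O_i=O\setminus\bigcup_{j<i}A_j$. These are exactly the elements of $O$ still scanned in round $i$, apart from those landing in $A_i$. I aim to show that for each $i$
\[
f(O_i\cup A_i)-f(A_i)\le \tfrac{B}{(1-\varepsilon)B'}f(A_i)+\varepsilon f(O).
\]
If $A_i$ never reaches cost $B'$, then every $e\in O_i\setminus A_i$ failed the last threshold $\tau_{\mathrm{last}}\le\varepsilon\Gamma/B'$. By submodularity and $c(O)\le B\le B'$, the marginal gain is at most $\varepsilon f(O)$.

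Now suppose $A_i$ reaches cost $B'$ at threshold $\tau^*$. If this happens in the first threshold iteration, then $f(A_i)\ge B'\cdot19\Gamma/(2B')\ge f(O)/2$, and we are done directly. Otherwise every element of $O_i\setminus A_i$ had density below $\tau^*/(1-\varepsilon)$ with respect to the final $A_i$. This bounds the gain by $B\tau^*/(1-\varepsilon)$. Meanwhile $f(A_i)\ge c(A_i)\tau^*\ge B'\tau^*$, as in Claim~\ref{claim:threshold-greedy-gain}.

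\textbf{Reinserting $O\cap A$.} The elements of $O$ lying in earlier sets are handled by the USM step. By submodularity,
\[
f(O\cup A_i)-f(O_i\cup A_i)\le f(A_i\cup(O\cap A))-f(A_i).
\]
Lemma~\ref{lem:USM}, applied to $g_i$ with $g_i(\varnothing)=f(A_i)$, gives
\[
\mathrm{ALG}\ge \tfrac14 f(A_i)+(\tfrac12-\varepsilon)f(A_i\cup(O\cap A)).
\]
So $f(A_i\cup(O\cap A))\le \frac{2}{1-2\varepsilon}\mathrm{ALG}-\frac{1}{2(1-2\varepsilon)}f(A_i)$. Adding the two bounds and using $f(A_i)\le\mathrm{ALG}$, up to $O(\varepsilon)$ factors,
\[
f(O\cup A_i)\le \bigl(\tfrac{B}{B'}-\tfrac12\bigr)f(A_i)+2\,\mathrm{ALG}+\varepsilon f(O).
\]

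\textbf{Averaging.} Since the $A_i$ are pairwise disjoint, submodularity and non-negativity give $\sum_i f(O\cup A_i)\ge(\ell-1)f(O)$, the multi-set analogue of Observation~\ref{obs:upper-bound-of-f(O)}. Averaging the per-index bound over $i$:
\begin{itemize}
\item For $B'=B$, $\ell=6$: we get $\frac56 f(O)\le\frac52\mathrm{ALG}+O(\varepsilon)f(O)$, i.e.\ $\mathrm{ALG}\ge(\frac13-O(\varepsilon))f(O)$.
\item For $B'=2B$, $\ell=\lceil1/\varepsilon\rceil$: the $f(A_i)$ coefficient vanishes and $(1-\varepsilon)f(O)\le 2\mathrm{ALG}+O(\varepsilon)f(O)$, so $\mathrm{ALG}\ge(\frac12-O(\varepsilon))f(O)$.
\end{itemize}
Rescaling $\varepsilon$ finishes both parts.

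The main obstacle is the saturated-case threshold argument in round $i$. It must account correctly for which optimal elements were scanned, namely $O_i$, and for the fact that additions stop once $c(A_i)\ge B'$ rather than at an exact threshold. It must also be combined with the USM bound in a way that keeps the extra $\frac14 f(A_i)$ term. Without that term the constants only give $5/18$ and $2/5$ instead of $1/3$ and $1/2$.
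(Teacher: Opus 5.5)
Your overall route is the paper's. It uses $\sum_i f(O\cup A_i)\ge(\ell-1)f(O)$ from disjointness, a threshold argument for optimal elements outside $A_i$ (with first-iteration saturation handled separately), and DetUSM over $A$ keeping the $\frac14 f(A_i)$ term. Averaging your per-index bound over $i$ is equivalent to the paper's choice of one good index. Your query, budget and threshold-case arguments are fine.

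The step that fails as written is the reinsertion inequality $f(O\cup A_i)-f(O_i\cup A_i)\le f(A_i\cup(O\cap A))-f(A_i)$. Put $P=O\cap\bigcup_{j<i}A_j$ and $Q=O\cap\bigcup_{j>i}A_j$. Submodularity gives only $f(O\cup A_i)-f(O_i\cup A_i)=f(P\mid O_i\cup A_i)\le f(A_i\cup P)-f(A_i)$.

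Your right-hand side instead involves $A_i\cup(O\cap A)=A_i\cup P\cup Q$, and $Q$ already lies inside $O_i$. Applying the lattice form of submodularity to $O_i\cup A_i$ and $A_i\cup(O\cap A)$, whose intersection is $A_i\cup Q$, produces $f(A_i\cup Q)$ where you need $f(A_i)$. For non-monotone $f$ one can have $f(A_i\cup Q)<f(A_i)$. The fact that $Q$ was accepted into later sets $A_j$ says nothing about its marginal with respect to $A_i$.

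The repair is one line. DetUSM's guarantee is against $\max_{T\subseteq A}g_i(T)$, so compare with $T=P$ to get $\mathrm{ALG}\ge\frac14 f(A_i)+(\frac12-\varepsilon)f(A_i\cup P)$. The rest of your computation is unchanged.

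Alternatively, adopt the paper's index-independent split $\hat O=O\setminus A$, $\dot O=O\cap A$. Since $\hat O$ is disjoint from both $\dot O$ and $A_i$, the lattice inequality gives exactly $f(O\cup A_i)\le f(\hat O\cup A_i)-f(A_i)+f(\dot O\cup A_i)$. Your threshold argument applies verbatim to $\hat O$ because $\hat O\subseteq O_i\setminus A_i$.
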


We now present the analysis.
Let \(S^*\) denote the set returned by Algorithm~\ref{alg:bicriteria-general}.
Observe that \(A_i\cap A_j=\varnothing\) for all \(i\neq j\in[\ell]\), and \(A=\cup_{i\in[\ell]}A_i\).
As usual, let \(O\) be an optimal solution, and let \(\hat{O} \coloneqq O\setminus A\) and \(\dot{O}\coloneqq O\cap A\).

\begin{lemma}\label{lem:bicriteria-marginal}
 Assume that the \emph{while} loop did not break in the first iteration.
    Then for all \(i\in[\ell]\),
    \[
        f(\hat{O}\cup A_i)-f(A_i)\leq\frac{B}{B'}\cdot\frac{f(A_i)}{1-\varepsilon}+\varepsilon\cdot f(O).
    \]
\end{lemma}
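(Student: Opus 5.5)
We bound the left-hand side by singleton marginals and then split on whether \(A_i\) exhausts its budget \(B'\). By submodularity,
\[
f(\hat O\cup A_i)-f(A_i)\le \sum_{e\in\hat O\setminus A_i} f(e\mid A_i)=\sum_{e\in\hat O} f(e\mid A_i),
\]
since \(\hat O\cap A_i=\varnothing\). The key structural fact is that \(\hat O=O\setminus A\) is disjoint from \(A_1,\ldots,A_{i-1}\). So while \(A_i\) is being built, every \(e\in\hat O\) lies in \(V\setminus A\) and is scanned at every threshold value, and it is never added to \(A_i\).

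\textbf{Case 1: \(c(A_i)<B'\) at the end of the \emph{while} loop.} Consider the last threshold \(\tau_{\mathrm{last}}\). By the loop condition and the update rule, \(\varepsilon(1-\varepsilon)\Gamma/B'<\tau_{\mathrm{last}}\le\varepsilon\Gamma/B'\). During that scan the budget test \(c(A_i)<B'\) always held, because costs only grow and the final cost is below \(B'\). Hence each \(e\in\hat O\) was rejected only because its density was below \(\tau_{\mathrm{last}}\). By submodularity (the final \(A_i\) contains the state at which \(e\) was tested), \(f(e\mid A_i)\le c(e)\,\varepsilon\Gamma/B'\). Summing and using \(c(\hat O)\le B\le B'\) (in both parameter settings) and \(\Gamma\le f(O)\) gives
\[
\sum_{e\in\hat O} f(e\mid A_i)\le \varepsilon\,\frac{B}{B'}\,f(O)\le\varepsilon f(O),
\]
which is stronger than the claim.

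\textbf{Case 2: \(c(A_i)\ge B'\) at the end.} Let \(\tau_f\) be the threshold during whose scan \(c(A_i)\) first reached \(B'\). By the hypothesis that this does not happen in the first iteration, the previous threshold \(\tau_f/(1-\varepsilon)\) was fully scanned with \(c(A_i)<B'\) throughout. Each \(e\in\hat O\) was therefore rejected for density, so \(f(e\mid A_i)\le c(e)\,\tau_f/(1-\varepsilon)\) by submodularity. Summing gives
\[
\sum_{e\in\hat O} f(e\mid A_i)\le \frac{B\,\tau_f}{1-\varepsilon}.
\]
On the other side, every element of \(A_i\) was added at a threshold of at least \(\tau_f\), with density at least \(\tau_f\) relative to the then-current set. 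Telescoping and using \(f(\varnothing)\ge0\) gives \(f(A_i)\ge c(A_i)\tau_f\ge B'\tau_f\), so \(B\tau_f\le (B/B')f(A_i)\). Substituting yields
\[
\sum_{e\in\hat O} f(e\mid A_i)\le \frac{B}{B'}\cdot\frac{f(A_i)}{1-\varepsilon}.
\]

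Taking the maximum of the two cases gives the lemma's bound, since each case's bound is at most the stated sum. The main point needing care is Case 2. There the bound on \(f(e\mid A_i)\) must be read off at the threshold \emph{preceding} saturation, because during the saturating scan an element of \(\hat O\) may have been rejected by the budget test rather than the density test. This is exactly why the first-iteration exclusion is needed and why the factor \(1/(1-\varepsilon)\) appears. Once \(A_i\) is saturated, no further elements are added, so later thresholds do not affect the argument.
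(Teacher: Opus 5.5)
Your proof is correct and follows essentially the same two-case argument as the paper: a saturated case bounded via the threshold preceding saturation and \(f(A_i)\ge B'\tau_f\), and an unsaturated case bounded via \(\tau_{\mathrm{last}}\le\varepsilon\Gamma/B'\). You present the cases in the opposite order. Your definition of the saturating threshold \(\tau_f\) is slightly more careful than the paper's phrasing, and you state explicitly that \(c(\hat O)\le B\le B'\), a step the paper's unsaturated case uses only implicitly.
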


\begin{proof}
    \textbf{Case 1.} \(c(A_i)\geq B'\).
    Let \(\tau^*\) denote the largest threshold for which \(c(A_i)\leq B'\). Then, in the last iteration where \(\tau=(1-\varepsilon)^{-1}\tau^*\), we have $c(A_i)<B'$ and no element of \(\hat{O}\) was added to $A_i$.
    For each $e\in \hat{O}$, it was not added because its marginal density is below the threshold $(1-\varepsilon)^{-1}\tau^*$.
    By submodularity, it follows that
    \[
        \frac{f(e\mid A_i)}{c(e)}\leq \frac{\tau^*}{1-\varepsilon},
    \]
    and
    \[
        f(\hat{O}\cup A_i)-f(A_i)\leq \sum_{e\in\hat{O}} f(e\mid A_i)\leq c(\hat{O})\cdot \frac{\tau^*}{1-\varepsilon}\leq \frac{\tau^*}{1-\varepsilon}\cdot B.
    \]
    On the other hand, the marginal density of every element in \(A_i\) is at least \(\tau^*\).
    Together with condition \(c(A_i)\geq B'\), we have
    \[
        f(A_i)\geq c(A_i)\cdot \tau^*\geq \tau^*B'.
    \]
    Combining the last two inequalities yields
    \[
        f(\hat{O}\cup A_i)-f(A_i)\leq \frac{B}{B'}\cdot\frac{f(A_i)}{1-\varepsilon}.
    \]

    \textbf{Case 2.} \(c(A_i)<B'\).
    By the \emph{while} loop condition of Algorithm~\ref{alg:bicriteria-general}, the last threshold \(\tau_{\mathrm{last}}\) satisfies \(\varepsilon(1-\varepsilon)\Gamma/B'<\tau_{\mathrm{last}}\leq \varepsilon\cdot \Gamma/B'\). For every \(e\in\hat{O}\), since \(c(A_i)<B'\), it was not added to \(A_i\) because its marginal density is less than $\tau_{\mathrm{last}}$, i.e.,
    \[ \frac{f(e\mid A_{i})}{c(e)}\leq \tau_{\mathrm{last}}\leq \varepsilon\cdot \Gamma/B'. \]
    By submodularity and the fact that $\Gamma\leq f(O)$,
    \[ f(\hat{O}\cup A_i)-f(A_i)\leq \sum_{e\in\hat{O}} f(e\mid A_i)\leq c(\hat{O})\cdot \varepsilon\cdot \Gamma/B'\leq \varepsilon \cdot f(O). \]
    
    Combining both cases, we obtain
    \[
        f(\hat{O}\cup A_i)-f(A_i)\leq\max\left\{\frac{B}{B'}\cdot\frac{f(A_i)}{1-\varepsilon},\varepsilon \cdot f(O)\right\}\leq\frac{B}{B'}\cdot\frac{f(A_i)}{1-\varepsilon}+\varepsilon\cdot f(O).\qedhere
    \]
\end{proof}

We are now ready to prove Theorem~\ref{thm:bicriteria-detail}.

\begin{proof}
    If the \emph{while} loop terminates in its first iteration, when \(\tau = 19\Gamma/(2B')\), then \(c(A_i) \geq B'\), and for \(B'\in \{B,2B\}\),
    \[ f(A_i)\geq c(A_i)\cdot\tau\geq B'\cdot 19\Gamma/(2B')\geq f(O)/2. \]
    The approximation guarantee holds for both claims.
    Thus, we assume that the \emph{while} loop terminates after its first iteration in the following.

    Since the \(A_i\)'s are pairwise disjoint, repeated application of submodularity gives
    \[ \sum_{i \in [\ell]} f(O \cup A_i) \geq (\ell - 1)\cdot f(O) + f(O \cup A) \geq (\ell - 1)\cdot f(O). \]
    Thus, by an averaging argument, there exists \(i\in[\ell]\) such that \(f(O\cup A_i)\geq \left(1-\frac1\ell\right)\cdot f(O)\).

    Since $\dot{O}\subseteq A$, by Lemma~\ref{lem:USM},
    \begin{equation}\label{eq:USM}
        \begin{split}
            f(D_i \cup A_i)= g_i(D_i) &\geq \left(\frac{1}{2}-\varepsilon\right)\cdot g_i(\dot{O}) + \frac{1}{4}\cdot g_i(\varnothing) \\
            &= \left(\frac{1}{2}-\varepsilon\right)\cdot f(\dot{O} \cup A_i)
            + \frac{1}{4}\cdot f(A_i).
        \end{split}
    \end{equation}
    Then, we have
    \begin{align*}
        \left(1 - \frac{1}{\ell}\right)\cdot f(O) &\leq f(O \cup A_i) \\
        &\leq f(\hat{O} \cup A_i) - f(A_i) + f(\dot{O} \cup A_i) \\
        &\leq \frac{B}{B'}\cdot\frac{f(A_i)}{1-\varepsilon}+\varepsilon\cdot f(O) + \left(\frac{2}{1-2\varepsilon}\cdot f(D_i \cup A_i) - \frac{f(A_i)}{2-4\varepsilon}\right).
        \end{align*}
    The second inequality follows from submodularity, and the third inequality follows from Lemma~\ref{lem:bicriteria-marginal} and Eq.~\eqref{eq:USM}.
    
    If \(\ell=6\) and \(B'=B\), the above inequality reduces to
    \[ \frac{5}{6}\cdot f(O)\leq \frac{2}{1-2\varepsilon}\cdot f(D_i \cup A_i)+ \frac{1-3\varepsilon}{(1-\varepsilon)(2-4\varepsilon)}\cdot f(A_i)+\varepsilon\cdot f(O)\leq \frac{5-7\varepsilon}{2-6\varepsilon}\cdot f(S^*)+\varepsilon\cdot f(O). \]
    By rearranging the inequality, we obtain
    \[ 
    f(S^*)\geq \left(\frac13-\varepsilon\right)\cdot\frac{5-6\varepsilon}{5-7\varepsilon}f(O)\geq\left(\frac13-\varepsilon\right)\cdot f(O).
    \]
    
    If \(\ell=\lceil 1/\varepsilon\rceil\) and \(B'=2B\),
    the above inequality reduces to 
    \[ (1-\varepsilon)\cdot f(O)\leq \frac{2}{1-2\varepsilon}\cdot f(D_i \cup A_i)+\varepsilon\cdot f(O)\leq \frac{2}{1-2\varepsilon}\cdot f(S^*)+\varepsilon\cdot f(O). \]
    By rearranging the inequality, we obtain
    \[ f(S^*)\geq \frac{\left(1-2\varepsilon\right)^2}{2}\cdot f(O)\ge\left(\frac{1}{2}-2\varepsilon\right)\cdot f(O). \]

    For each \(i\in [\ell]\), since Algorithm~\ref{alg:bicriteria-general} adds elements to \(A_i\) only when \(c(A_i)<B'\), we have \(c(A_i)<B'+B\).
    Since \(A=\cup_{i=1}^{\ell} A_i\) and \(A_i\)'s are pairwise disjoint, \(c(A)=\sum_{i=1}^{\ell}c(A_i)\leq \ell(B'+B)\).
    Note that \(S^*\) is a subset of \(A\), thus \(c(S^*)\leq \ell(B'+B)\).
    Consequently, when \(\ell=6\) and \(B'=B\), we have \(c(S^*)\leq 12B\); when \(\ell=\lceil 1/\varepsilon\rceil\) and \(B'=2B\), we have \(c(S^*)=O(\varepsilon^{-1}B)\).

    It remains to bound the number of queries used by Algorithm~\ref{alg:bicriteria-general}.
    It invokes LA to get value \(\Gamma\), which uses \(O(n)\) queries by Lemma~\ref{lem:LA-estimate}.
    It then repeatedly constructs \(\ell\) disjoint \(A_i\)'s.
    For each \(A_i\), it performs a \emph{while} loop with \(O(\log(1/\varepsilon)/\varepsilon)\) iterations.
    In each iteration, it scans the elements and uses \(O(n)\) queries.
    Thus, the construction of \(A_i\)'s uses \(O(\ell n\log(1/\varepsilon)/\varepsilon)\) queries.
    The post-processing phase invokes \textsc{DetUSM} \(\ell\) times, which uses \(O(\ell n/\varepsilon)\) additional queries by Lemma~\ref{lem:USM}.
    To summarize, the total number of queries is \(O(\ell n\log(1/\varepsilon)/\varepsilon)\).
    Substituting \(\ell=6\) and \(\ell=\lceil 1/\varepsilon\rceil\) gives the correct number of queries, respectively.
\end{proof}

\section{Conclusion}
\label{sec:conclusion}

In this paper, we presented a deterministic linear-query algorithm for non-monotone SMK that achieves a $(1/4-\varepsilon)$-approximation, matching the best previously known randomized ratio.
Our result is obtained by combining the threshold-twin-greedy framework with residual-budget enumeration, and observing a novel bicriteria reduction.
As a secondary contribution, we obtained a $(1/2-\varepsilon, O(1/\varepsilon))$-bicriteria approximation with linear query complexity, improving upon prior quadratic-query algorithms.

A natural open question is whether any linear-query algorithm, deterministic or randomized, can break the $1/4$ barrier.

\bibliography{SMK}

@inproceedings{AmanatidisFLLR20,
  author       = {Georgios Amanatidis and
                  Federico Fusco and
                  Philip Lazos and
                  Stefano Leonardi and
                  Rebecca Reiffenh{\"{a}}user},
  title        = {Fast Adaptive Non-Monotone Submodular Maximization Subject to a Knapsack
                  Constraint},
  booktitle    = {Advances in Neural Information Processing Systems 33: Annual Conference
                  on Neural Information Processing Systems 2020, NeurIPS 2020, December
                  6-12, 2020, virtual},
  year         = {2020},
  bibsource    = {dblp computer science bibliography, https://dblp.org}
}

@inproceedings{BadanidiyuruV14,
  author       = {Ashwinkumar Badanidiyuru and
                  Jan Vondr{\'{a}}k},
  title        = {Fast algorithms for maximizing submodular functions},
  booktitle    = {Proceedings of the Twenty-Fifth Annual {ACM-SIAM} Symposium on Discrete
                  Algorithms, {SODA} 2014, Portland, Oregon, USA, January 5-7, 2014},
  pages        = {1497--1514},
  publisher    = {{SIAM}},
  year         = {2014},
  bibsource    = {dblp computer science bibliography, https://dblp.org}
}

@article{BuchbinderF18,
  author       = {Niv Buchbinder and
                  Moran Feldman},
  title        = {Deterministic Algorithms for Submodular Maximization Problems},
  journal      = {{ACM} Trans. Algorithms},
  volume       = {14},
  number       = {3},
  pages        = {32:1--32:20},
  year         = {2018},
  bibsource    = {dblp computer science bibliography, https://dblp.org}
}

@article{BuchbinderF19,
  author       = {Niv Buchbinder and
                  Moran Feldman},
  title        = {Constrained Submodular Maximization via a Nonsymmetric Technique},
  journal      = {Math. Oper. Res.},
  volume       = {44},
  number       = {3},
  pages        = {988--1005},
  year         = {2019},
  bibsource    = {dblp computer science bibliography, https://dblp.org}
}

@inproceedings{BuchbinderF19soda,
  author       = {Niv Buchbinder and
                  Moran Feldman and
                  Mohit Garg},
  editor       = {Timothy M. Chan},
  title        = {Deterministic ({\textonehalf} + {\(\epsilon\)})-Approximation for
                  Submodular Maximization over a Matroid},
  booktitle    = {Proceedings of the Thirtieth Annual {ACM-SIAM} Symposium on Discrete
                  Algorithms, {SODA} 2019, San Diego, California, USA, January 6-9,
                  2019},
  pages        = {241--254},
  publisher    = {{SIAM}},
  year         = {2019},
  bibsource    = {dblp computer science bibliography, https://dblp.org}
}

@inproceedings{BuchbinderF24,
  author       = {Niv Buchbinder and
                  Moran Feldman},
  title        = {Constrained Submodular Maximization via New Bounds for DR-Submodular
                  Functions},
  booktitle    = {Proceedings of the 56th Annual {ACM} Symposium on Theory of Computing,
                  {STOC} 2024, Vancouver, BC, Canada, June 24-28, 2024},
  pages        = {1820--1831},
  publisher    = {{ACM}},
  year         = {2024},
  bibsource    = {dblp computer science bibliography, https://dblp.org}
}

@inproceedings{BuchbinderF24focs,
  author       = {Niv Buchbinder and
                  Moran Feldman},
  title        = {Deterministic Algorithm and Faster Algorithm for Submodular Maximization
                  Subject to a Matroid Constraint},
  booktitle    = {65th {IEEE} Annual Symposium on Foundations of Computer Science, {FOCS}
                  2024, Chicago, IL, USA, October 27-30, 2024},
  pages        = {700--712},
  publisher    = {{IEEE}},
  year         = {2024},
  bibsource    = {dblp computer science bibliography, https://dblp.org}
}

@inproceedings{BuchbinderF25,
  author       = {Niv Buchbinder and
                  Moran Feldman},
  editor       = {Michal Kouck{\'{y}} and
                  Nikhil Bansal},
  title        = {Extending the Extension: Deterministic Algorithm for Non-monotone
                  Submodular Maximization},
  booktitle    = {Proceedings of the 57th Annual {ACM} Symposium on Theory of Computing,
                  {STOC} 2025, Prague, Czechia, June 23-27, 2025},
  pages        = {1130--1141},
  publisher    = {{ACM}},
  year         = {2025},
  bibsource    = {dblp computer science bibliography, https://dblp.org}
}

@inproceedings{Crawford23,
  author       = {Victoria G. Crawford},
  title        = {Scalable Bicriteria Algorithms for Non-Monotone Submodular Cover},
  booktitle    = {International Conference on Artificial Intelligence and Statistics,
                  25-27 April 2023, Palau de Congressos, Valencia, Spain},
  series       = {Proceedings of Machine Learning Research},
  volume       = {206},
  pages        = {9517--9537},
  publisher    = {{PMLR}},
  year         = {2023},
  bibsource    = {dblp computer science bibliography, https://dblp.org}
}

@inproceedings{EneN19,
  author       = {Alina Ene and
                  Huy L. Nguyen},
  title        = {A Nearly-Linear Time Algorithm for Submodular Maximization with a
                  Knapsack Constraint},
  booktitle    = {46th International Colloquium on Automata, Languages, and Programming,
                  {ICALP} 2019, Patras, Greece, July 9-12, 2019},
  series       = {LIPIcs},
  volume       = {132},
  pages        = {53:1--53:12},
  publisher    = {Schloss Dagstuhl - Leibniz-Zentrum f{\"{u}}r Informatik},
  year         = {2019},
  bibsource    = {dblp computer science bibliography, https://dblp.org}
}

@article{FeldmanK25,
  author       = {Moran Feldman and
                  Alan Kuhnle},
  title        = {Bicriteria Submodular Maximization},
  journal      = {CoRR},
  volume       = {abs/2507.10248},
  year         = {2025},
  eprinttype   = {arXiv},
  eprint       = {2507.10248},
  bibsource    = {dblp computer science bibliography, https://dblp.org}
}

@article{FeldmanNS23,
  author       = {Moran Feldman and
                  Zeev Nutov and
                  Elad Shoham},
  title        = {Practical Budgeted Submodular Maximization},
  journal      = {Algorithmica},
  volume       = {85},
  number       = {5},
  pages        = {1332--1371},
  year         = {2023},
  bibsource    = {dblp computer science bibliography, https://dblp.org}
}

@inproceedings{GuptaRST10,
  author       = {Anupam Gupta and
                  Aaron Roth and
                  Grant Schoenebeck and
                  Kunal Talwar},
  title        = {Constrained Non-monotone Submodular Maximization: Offline and Secretary
                  Algorithms},
  booktitle    = {Internet and Network Economics - 6th International Workshop, {WINE}
                  2010, Stanford, CA, USA, December 13-17, 2010. Proceedings},
  series       = {Lecture Notes in Computer Science},
  volume       = {6484},
  pages        = {246--257},
  publisher    = {Springer},
  year         = {2010},
  bibsource    = {dblp computer science bibliography, https://dblp.org}
}

@article{HanCZZWYXTH21,
  author       = {Kai Han and
                  Shuang Cui and
                  Tianshuai Zhu and
                  Enpei Zhang and
                  Benwei Wu and
                  Zhizhuo Yin and
                  Tong Xu and
                  Shaojie Tang and
                  He Huang},
  title        = {Approximation Algorithms for Submodular Data Summarization with a
                  Knapsack Constraint},
  journal      = {Proc. {ACM} Meas. Anal. Comput. Syst.},
  volume       = {5},
  number       = {1},
  pages        = {05:1--05:31},
  year         = {2021},
  bibsource    = {dblp computer science bibliography, https://dblp.org}
}

@inproceedings{KempeKT03,
  author       = {David Kempe and
                  Jon M. Kleinberg and
                  {\'{E}}va Tardos},
  title        = {Maximizing the spread of influence through a social network},
  booktitle    = {Proceedings of the Ninth {ACM} {SIGKDD} International Conference on
                  Knowledge Discovery and Data Mining, Washington, DC, USA, August 24
                  - 27, 2003},
  pages        = {137--146},
  publisher    = {{ACM}},
  year         = {2003},
  bibsource    = {dblp computer science bibliography, https://dblp.org}
}

@article{KhullerMN99,
  author       = {Samir Khuller and
                  Anna Moss and
                  Joseph Naor},
  title        = {The Budgeted Maximum Coverage Problem},
  journal      = {Inf. Process. Lett.},
  volume       = {70},
  number       = {1},
  pages        = {39--45},
  year         = {1999},
  bibsource    = {dblp computer science bibliography, https://dblp.org}
}

@inproceedings{Kuhnle21,
  author       = {Alan Kuhnle},
  title        = {Nearly Linear-Time, Parallelizable Algorithms for Non-Monotone Submodular
                  Maximization},
  booktitle    = {Thirty-Fifth {AAAI} Conference on Artificial Intelligence, {AAAI}
                  2021, Thirty-Third Conference on Innovative Applications of Artificial
                  Intelligence, {IAAI} 2021, The Eleventh Symposium on Educational Advances
                  in Artificial Intelligence, {EAAI} 2021, Virtual Event, February 2-9,
                  2021},
  pages        = {8200--8208},
  publisher    = {{AAAI} Press},
  year         = {2021},
  bibsource    = {dblp computer science bibliography, https://dblp.org}
}

@article{KulikST13,
  author       = {Ariel Kulik and
                  Hadas Shachnai and
                  Tami Tamir},
  title        = {Approximations for Monotone and Nonmonotone Submodular Maximization
                  with Knapsack Constraints},
  journal      = {Math. Oper. Res.},
  volume       = {38},
  number       = {4},
  pages        = {729--739},
  year         = {2013},
  bibsource    = {dblp computer science bibliography, https://dblp.org}
}

@inproceedings{LeskovecKGFVG07,
  author       = {Jure Leskovec and
                  Andreas Krause and
                  Carlos Guestrin and
                  Christos Faloutsos and
                  Jeanne M. VanBriesen and
                  Natalie S. Glance},
  title        = {Cost-effective outbreak detection in networks},
  booktitle    = {Proceedings of the 13th {ACM} {SIGKDD} International Conference on
                  Knowledge Discovery and Data Mining, San Jose, California, USA, August
                  12-15, 2007},
  pages        = {420--429},
  publisher    = {{ACM}},
  year         = {2007},
  bibsource    = {dblp computer science bibliography, https://dblp.org}
}

@inproceedings{LiF0K22,
  author       = {Wenxin Li and
                  Moran Feldman and
                  Ehsan Kazemi and
                  Amin Karbasi},
  title        = {Submodular Maximization in Clean Linear Time},
  booktitle    = {Advances in Neural Information Processing Systems 35: Annual Conference
                  on Neural Information Processing Systems 2022, NeurIPS 2022, New Orleans,
                  LA, USA, November 28 - December 9, 2022},
  year         = {2022},
  bibsource    = {dblp computer science bibliography, https://dblp.org}
}

@inproceedings{LinB11,
  author       = {Hui Lin and
                  Jeff A. Bilmes},
  title        = {A Class of Submodular Functions for Document Summarization},
  booktitle    = {The 49th Annual Meeting of the Association for Computational Linguistics:
                  Human Language Technologies, Proceedings of the Conference, 19-24
                  June, 2011, Portland, Oregon, {USA}},
  pages        = {510--520},
  publisher    = {The Association for Computer Linguistics},
  year         = {2011},
  bibsource    = {dblp computer science bibliography, https://dblp.org}
}

@article{MirzasoleimanKS16,
  author       = {Baharan Mirzasoleiman and
                  Amin Karbasi and
                  Rik Sarkar and
                  Andreas Krause},
  title        = {Distributed Submodular Maximization},
  journal      = {J. Mach. Learn. Res.},
  volume       = {17},
  pages        = {238:1--238:44},
  year         = {2016},
  bibsource    = {dblp computer science bibliography, https://dblp.org}
}

@inproceedings{NarasimhanB05,
  author       = {Mukund Narasimhan and
                  Jeff A. Bilmes},
  title        = {A Submodular-supermodular Procedure with Applications to Discriminative
                  Structure Learning},
  booktitle    = {{UAI} '05, Proceedings of the 21st Conference in Uncertainty in Artificial
                  Intelligence, Edinburgh, Scotland, July 26-29, 2005},
  pages        = {404--412},
  publisher    = {{AUAI} Press},
  year         = {2005},
  bibsource    = {dblp computer science bibliography, https://dblp.org}
}

@article{NemhauserW78,
  author       = {George L. Nemhauser and
                  Laurence A. Wolsey},
  title        = {Best Algorithms for Approximating the Maximum of a Submodular Set
                  Function},
  journal      = {Math. Oper. Res.},
  volume       = {3},
  number       = {3},
  pages        = {177--188},
  year         = {1978},
  bibsource    = {dblp computer science bibliography, https://dblp.org}
}

@article{Pham25,
  author       = {Canh V. Pham},
  title        = {Enhanced deterministic approximation algorithm for non-monotone submodular
                  maximization under knapsack constraint with linear query complexity},
  journal      = {J. Comb. Optim.},
  volume       = {49},
  number       = {1},
  pages        = {2},
  year         = {2025},
  bibsource    = {dblp computer science bibliography, https://dblp.org}
}

@inproceedings{PhamT0T23,
  author       = {Canh V. Pham and
                  Tan D. Tran and
                  Dung K. T. Ha and
                  My T. Thai},
  title        = {Linear Query Approximation Algorithms for Non-monotone Submodular
                  Maximization under Knapsack Constraint},
  booktitle    = {Proceedings of the Thirty-Second International Joint Conference on
                  Artificial Intelligence, {IJCAI} 2023, 19th-25th August 2023, Macao,
                  SAR, China},
  pages        = {4127--4135},
  publisher    = {ijcai.org},
  year         = {2023},
  bibsource    = {dblp computer science bibliography, https://dblp.org}
}

@inproceedings{Qi22,
  author       = {Benjamin Qi},
  title        = {On Maximizing Sums of Non-Monotone Submodular and Linear Functions},
  booktitle    = {33rd International Symposium on Algorithms and Computation, {ISAAC}
                  2022, Seoul, South Korea, December 19-21, 2022},
  series       = {LIPIcs},
  volume       = {248},
  pages        = {41:1--41:16},
  publisher    = {Schloss Dagstuhl - Leibniz-Zentrum f{\"{u}}r Informatik},
  year         = {2022},
  bibsource    = {dblp computer science bibliography, https://dblp.org}
}

@article{Sviridenko04,
  author       = {Maxim Sviridenko},
  title        = {A note on maximizing a submodular set function subject to a knapsack
                  constraint},
  journal      = {Oper. Res. Lett.},
  volume       = {32},
  number       = {1},
  pages        = {41--43},
  year         = {2004},
  bibsource    = {dblp computer science bibliography, https://dblp.org}
}

@article{Wolsey82,
  author       = {Laurence A. Wolsey},
  title        = {An analysis of the greedy algorithm for the submodular set covering
                  problem},
  journal      = {Comb.},
  volume       = {2},
  number       = {4},
  pages        = {385--393},
  year         = {1982},
  bibsource    = {dblp computer science bibliography, https://dblp.org}
}

\end{document}